\newenvironment{pic}[1][]
{\begin{aligned}\begin{tikzpicture}[#1]}
{\end{tikzpicture}\end{aligned}}
\newcommand{\edges}[1][]%
{
}
\theoremstyle{plain}
\newtheorem{theorem}{Theorem}
\newtheorem{lemma}[theorem]{Lemma}
\newtheorem{proposition}[theorem]{Proposition}
\newtheorem{corollary}[theorem]{Corollary}
\theoremstyle{definition}
\newtheorem{definition}[theorem]{Definition} 
\newtheorem{example}[theorem]{Example}
\newtheorem{remark}[theorem]{Remark}
\def\calign@preamble{%
   &\hfil\strut@
    \setboxz@h{\@lign$\m@th\displaystyle{##}$}%
    \ifmeasuring@\savefieldlength@\fi
    \set@field
    \hfil
    \tabskip\alignsep@
}
\let\cmeasure@\measure@
\patchcmd\cmeasure@{\divide\@tempcntb\tw@}{}{}{}
\patchcmd\cmeasure@{\divide\@tempcntb\tw@}{}{}{}
\patchcmd\cmeasure@{\ifodd\maxfields@
  \global\advance\maxfields@\@ne
  \fi}{}{}{}    
\newcommand\tinymatrix[1]
\renewcommand\thickspace{\kern2pt} \scriptstyle\begin{smallmatrix} #1 \end{smallmatrix} \hspace{-2pt} \right)}
\newcommand\ignore[1]{}
\DeclareMathOperator{\Tr}{Tr}
\renewcommand\dag{\ensuremath{\dagger}}
\newcommand\C{\ensuremath{\mathbb{C}}}
\def\arraystretch{1.0}
\newcommand\grid[1]{\ensuremath{\def\arraystretch{1.4}\begin{array}{|c|c|c|c|c|c|c|c|c|}\hline#1\\\hline\end{array}}}
\newcommand\I{\ensuremath{\mathbb I}}
\newcommand\super[2]{\stackrel{\makebox[0pt]{\smash{\tiny #1}}}{#2}}
\newcommand\bra[1]{\langle #1|}
\newcommand\ket[1]{{|} #1 \rangle}
\newcommand\braket[2]{\langle #1 | #2 \rangle}
\newcommand\ketbra[2]{|#1 \rangle \hspace{-1pt} \langle #2 |}
\newcounter{jamiecomment}
\newcounter{bencomment}
\newcommand\BMcomm[1]{\ensuremath{{}^{\color{blue}\thebencomment}}\marginpar{\color{blue}\tiny\raggedright \thebencomment: #1}\stepcounter{bencomment}}
\tikzstyle{dc}   = [circle, minimum width=8pt, draw, inner sep=0pt, path picture={\draw (path picture bounding box.south east) -- (path picture bounding box.north west) (path picture bounding box.south west) -- (path picture bounding box.north east);}]
\tikzstyle{dp}   = [circle, minimum width=8pt, draw, inner sep=0pt, path picture={\draw (path picture bounding box.west) -- (path picture bounding box.north) (path picture bounding box.south west) -- (path picture bounding box.north) (path picture bounding box.south west) -- (path picture bounding box.north east) (path picture bounding box.north east) -- (path picture bounding box.south) (path picture bounding box.south) -- (path picture bounding box.east);}]
\tikzstyle{ls} = [dc,scale=0.65]
\tikzstyle{ls'} = [dp,scale=1.3]
\tikzstyle{dpt}  = [circle, minimum width=8pt, draw, inner sep=0pt, path picture={\draw (path picture bounding box.south) -- (path picture bounding box.north) (path picture bounding box.west) -- (path picture bounding box.east);}]
\tikzstyle{agg} = [dpt,scale=0.65]
\renewenvironment{pic}[1][]
{\begin{aligned}\begin{tikzpicture}[#1]}
{\end{tikzpicture}\end{aligned}}
\def\calign@preamble{%
   &\hfil\strut@
    \setboxz@h{\@lign$\m@th\displaystyle{##}$}%
    \ifmeasuring@\savefieldlength@\fi
    \set@field
    \hfil
    \tabskip\alignsep@
}
\let\cmeasure@\measure@
\patchcmd\cmeasure@{\divide\@tempcntb\tw@}{}{}{}
\patchcmd\cmeasure@{\divide\@tempcntb\tw@}{}{}{}
\patchcmd\cmeasure@{\ifodd\maxfields@
  \global\advance\maxfields@\@ne
  \fi}{}{}{}    
    \gdef\node@@on@layer{%
      \setbox\tikz@tempbox=\hbox\bgroup\pgfonlayer{#1}\unhbox\tikz@tempbox\endpgfonlayer\egroup}
\def\node@on@layer{\aftergroup\node@@on@layer}
\def\thickness{0.7pt}
\tikzstyle{oldmorphism}=[minimum width=30pt, minimum height=16pt, draw, font=\small, inner sep=0pt, fill=white, line width=\thickness]
\tikzstyle{cross}=[preaction={draw=white, -, line width=10pt}]
\tikzstyle{braid}=[double=black, line width=3*\thickness, double distance=\thickness, white]
\tikzstyle{string}=[line width=\thickness]
\tikzstyle{scalar}=[circle, inner sep=0pt, minimum width=15pt, draw, line width=\thickness]
\tikzstyle{dot}=[circle, draw=black, fill=black!25, inner sep=.5ex, line width=\thickness, node on layer=foreground]
\tikzstyle{blackdot}=[circle, draw=black, fill=black, inner sep=.5ex, line width=\thickness, node on layer=foreground]
\tikzstyle{whitedot}=[circle, draw=black, fill=white, inner sep=.5ex, line width=\thickness, node on layer=foreground]
\tikzstyle{reddot}=[circle, draw=black, fill=red, inner sep=.5ex, line width=\thickness, node on layer=foreground]
\tikzstyle{bluedot}=[circle, draw=black, fill=blue, inner sep=.5ex, line width=\thickness, node on layer=foreground]
\tikzstyle{yellowdot}=[circle, draw=black, fill=yellow, inner sep=.5ex, line width=\thickness, node on layer=foreground]
\tikzstyle{greendot}=[circle, draw=black, fill=green, inner sep=.5ex, line width=\thickness, node on layer=foreground]
\tikzstyle{lsdot}=[circle, draw=black, fill=white, inner sep=.5ex, line width=\thickness, node on layer=foreground, path picture={\draw (path picture bounding box.south east) -- (path picture bounding box.north west) (path picture bounding box.south west) -- (path picture bounding box.north east);}]
\tikzstyle{lssdot}=[circle, draw=black, fill=white, inner sep=.5ex, line width=\thickness, node on layer=foreground, path picture={\draw (path picture bounding box.south) -- (path picture bounding box.north) (path picture bounding box.west) -- (path picture bounding box.east);}]
\tikzstyle{mixedmorphism}=[morphism, minimum width=30pt, minimum height=16pt, draw, font=\small, inner sep=0pt, fill=white, line width=\thickness,rounded corners=1ex]
\tikzstyle{thick}=[line width=\thickness]
\tikzstyle{tiny}=[font=\tiny]
\tikzset{arrow/.style={decoration={
    markings,
    mark=at position #1 with \arrow{thickarrow}},
    postaction=decorate}
}
\tikzset{reverse arrow/.style={decoration={
    markings,
    mark=at position #1 with \arrow{reversethickarrow}},
    postaction=decorate}
}
\newif\ifblack\pgfkeys{/tikz/black/.is if=black}
\newif\ifwedge\pgfkeys{/tikz/wedge/.is if=wedge}
\newif\ifvflip\pgfkeys{/tikz/vflip/.is if=vflip}
\newif\ifhflip\pgfkeys{/tikz/hflip/.is if=hflip}
\newif\ifhvflip\pgfkeys{/tikz/hvflip/.is if=hvflip}
\newif\ifconnectnw\pgfkeys{/tikz/connect nw/.is if=connectnw}
\newif\ifconnectne\pgfkeys{/tikz/connect ne/.is if=connectne}
\newif\ifconnectsw\pgfkeys{/tikz/connect sw/.is if=connectsw}
\newif\ifconnectse\pgfkeys{/tikz/connect se/.is if=connectse}
\newif\ifconnectn\pgfkeys{/tikz/connect n/.is if=connectn}
\newif\ifconnects\pgfkeys{/tikz/connect s/.is if=connects}
\newif\ifconnectnwf\pgfkeys{/tikz/connect nw >/.is if=connectnwf}
\newif\ifconnectnef\pgfkeys{/tikz/connect ne >/.is if=connectnef}
\newif\ifconnectswf\pgfkeys{/tikz/connect sw >/.is if=connectswf}
\newif\ifconnectsef\pgfkeys{/tikz/connect se >/.is if=connectsef}
\newif\ifconnectnf\pgfkeys{/tikz/connect n >/.is if=connectnf}
\newif\ifconnectsf\pgfkeys{/tikz/connect s >/.is if=connectsf}
\newif\ifconnectnwr\pgfkeys{/tikz/connect nw </.is if=connectnwr}
\newif\ifconnectner\pgfkeys{/tikz/connect ne </.is if=connectner}
\newif\ifconnectswr\pgfkeys{/tikz/connect sw </.is if=connectswr}
\newif\ifconnectser\pgfkeys{/tikz/connect se </.is if=connectser}
\newif\ifconnectnr\pgfkeys{/tikz/connect n </.is if=connectnr}
\newif\ifconnectsr\pgfkeys{/tikz/connect s </.is if=connectsr}
\tikzset{keylengthnw/.initial=\connectheight}
\tikzset{keylengthn/.initial =\connectheight}
\tikzset{keylengthne/.initial=\connectheight}
\tikzset{keylengthsw/.initial=\connectheight}
\tikzset{keylengths/.initial =\connectheight}
\tikzset{keylengthse/.initial=\connectheight}
\tikzset{connect nw length/.style={connect nw=true, keylengthnw={#1}}}
\tikzset{connect n length/.style ={connect n =true, keylengthn ={#1}}}
\tikzset{connect ne length/.style={connect ne=true, keylengthne={#1}}}
\tikzset{connect sw length/.style={connect sw=true, keylengthsw={#1}}}
\tikzset{connect s length/.style ={connect s =true, keylengths ={#1}}}
\tikzset{connect se length/.style={connect se=true, keylengthse={#1}}}
\tikzset{connect nw < length/.style={connect nw <=true, keylengthnw={#1}}}
\tikzset{connect n < length/.style ={connect n <=true,  keylengthn ={#1}}}
\tikzset{connect ne < length/.style={connect ne <=true, keylengthne={#1}}}
\tikzset{connect sw < length/.style={connect sw <=true, keylengthnw={#1}}}
\tikzset{connect s < length/.style ={connect s <=true,  keylengths ={#1}}}
\tikzset{connect se < length/.style={connect se <=true, keylengthse={#1}}}
\tikzset{connect nw > length/.style={connect nw >=true, keylengthnw={#1}}}
\tikzset{connect n > length/.style ={connect n >=true,  keylengthn ={#1}}}
\tikzset{connect ne > length/.style={connect ne >=true, keylengthne={#1}}}
\tikzset{connect sw > length/.style={connect sw >=true, keylengthsw={#1}}}
\tikzset{connect s > length/.style ={connect s >=true,  keylengths ={#1}}}
\tikzset{connect se > length/.style={connect se >=true, keylengthse={#1}}}
\newlength\morphismheight
\newlength\minimummorphismwidth
\newlength\stateheight
\newlength\minimumstatewidth
\newlength\connectheight
\tikzset{width/.initial=\minimummorphismwidth}
  \let\thickness=\pgfmathresult
\tikzset{forward arrow style/.style={every to/.style, decoration={
    markings,
    mark=at position 0.5 with \arrow{thickarrow}},
    postaction=decorate}}
\tikzset{reverse arrow style/.style={every to/.style, decoration={
    markings,
    mark=at position 0.5 with \arrow{reversethickarrow}},
    postaction=decorate}}
\newcommand{\tinyhandle}[1][dot]{\raisebox{-2pt}{\ensuremath{\hspace{-3pt}\begin{pic}[scale=0.4,string]
        \node (0) at (0,0) {};
        \node[dot, inner sep=1.0pt] (1) at (0,0.3) {};
        \node[dot, inner sep=1.0pt] (2) at (0,0.7) {};
        \node (3) at (0,1) {};
        \draw (0.center) to (1.center);
        \draw (2.center) to (3.center);
        \draw[in=180, out=180, looseness=2] (1.center) to (2.center);
        \draw[in=0, out=0, looseness=2] (1.center) to (2.center);
\end{pic}\hspace{-1pt}}}}
\renewcommand\dag{\ensuremath{\dagger}}
\begin{document}

\title{Orthogonality for Quantum Latin Isometry Squares}
\author{\begin{tabular}{c@{\hspace{50pt}}c}
Benjamin Musto\thanks{Department of Computer Science, University of Oxford, UK} & Jamie Vicary\footnotemark[1]{}\, \thanks{School of Computer Science, University of Birmingham, UK}
\\
\texttt{benjamin.musto@cs.ox.ac.uk}
&
\texttt{j.o.vicary@bham.ac.uk}
\end{tabular}}
\date{\today}
\def\titlerunning{Orthogonality for Quantum Latin Isometry Squares}
\def\authorrunning{B. Musto \& J. Vicary}

\maketitle

\begin{abstract}
Goyeneche et al recently proposed a notion of orthogonality for quantum Latin squares, and showed that orthogonal quantum Latin squares yield quantum codes. We give a simplified characterization of orthogonality for quantum Latin squares, which we show is equivalent to the existing notion. We use this simplified characterization to give an upper bound for the number of mutually orthogonal quantum Latin squares of a given size, and to give the first examples of orthogonal quantum Latin squares that do not arise from ordinary Latin squares. We then discuss quantum Latin isometry squares, generalizations of quantum Latin squares recently introduced by Benoist and Nechita, and define a new orthogonality property for these objects, showing that it also allows the construction of quantum codes. We give a new characterization of unitary error bases using these structures.


\end{abstract}

\section{Introduction}

\subsection{Summary}

At QPL 2016 the present authors introduced \textit{quantum Latin squares}~\cite{Musto2015, Musto2016}, as quantum structures generalizing the well-known Latin squares from classical combinatorics~\cite{latinsquare}. Since then this work has been built on separately by a number of researchers: in particular, by Goyeneche, Raissi, Di Martino and \.Zyczkowski~\cite{Goyeneche2017}, who propose a notion of \textit{orthogonality} for quantum Latin squares which allows the construction of quantum codes; and also by Benoist and Nechita~\cite{Benoist2017}, who introduce \textit{matrices of partial isometries of type (C1,C2,C3,C4)}, generalizations of quantum Latin squares which characterize system-environment observables preserving a certain set of pointer states.

In this paper we give a new formulation of orthogonality for quantum Latin squares, and use it to relate and generalize the works just cited, and extend them in certain ways. In particular, we highlight the following key contributions.
\begin{itemize}
\item We give a new, simplified definition of orthogonality for quantum Latin squares, and show that it is equivalent to the existing definition of Goyeneche et al~\cite[Definition 3]{Goyeneche2017}. (Definition~\ref{def:orthogonalqls}, Theorem~\ref{thm:equivalent}.)
\item We give the first example of a pair of orthogonal quantum Latin squares which are not equivalent to a pair of classical Latin squares. (Example~\ref{example:orthogonal} and Proposition~\ref{prop:notequiv}.) 
\item We show that there can be at most $n-1$ mutually orthogonal quantum Latin squares of dimension~$n$ (Theorem~\ref{thm:upperbound}.)
\item We introduce \textit{quantum Latin isometry squares} based on the  {\textit{matrices of partial isometries of type (C1,C2,C3,C4)}} defined by Benoist and Nechita~\cite[Definition 3.2]{Benoist2017}, and define a new notion of orthogonality for these objects. (Definitions~\ref{def:ipm} and~\ref{def:oipm}.)
\item We show how  orthogonal quantum Latin isometry squares can be used to build quantum codes. (Theorem~\ref{thm:ilstensor}.)
\item We show that unitary error bases give rise to orthogonal pairs of quantum Latin isometry squares, and in fact can be characterized in terms of them. (Theorem~\ref{thm:uebviaorthog}.)
\end{itemize}

{
\subsection{Related work}
Since the introduction of quantum Latin squares by the present authors~\cite{Musto2015}, two notions of orthogonality for quantum Latin squares have been introduced, both of which extend the standard notion for classical Latin squares.

The first such notion, to which we refer here as \emph{left orthogonality}, was introduced by the first author~\cite{Musto2016}, who showed it could be used to construct maximally entangled mutually unbiased bases. Given a pair of classical Latin squares which are left orthogonal by this definition, the left conjugates of each square are orthogonal Latin squares in the traditional sense~\cite{Mann1942}. This notion of orthogonality between QLS is not comparable to that which we study in this paper.

More recently, Goyeneche et al~\cite{Goyeneche2017} introduced another notion of orthogonality for quantum Latin squares, which also extends the traditional definition for classical Latin squares, and the definition which we study here is equivalent.     They extended their notion to quantum orthogonal arrays, more general objects which we do not consider here.
}
\begin{remark}\label{rem:entqls}
Note that the definition of mutually orthogonal quantum Latin squares  introduced by Goyeneche et al includes not only families of quantum  Latin squares satisfying an orthogonality condition but also entangled multi-partite states known as \textit{essentially quantum Latin squares}. This more general definition is exactly equivalent to that of \textit{perfect tensors} which are already well studied. We only consider orthogonality between two or more quantum Latin squares in this paper, and only use the terms orthogonal quantum Latin squares and mutually orthogonal quantum Latin squares to refer to such objects.
\end{remark} 
\subsection{Outline}

This paper has the following structure. In Section 2, we give background on quantum Latin squares, introduce our new definition of orthogonality, and explore its consequences, especially in relation to the work of Goyeneche et al~\cite{Goyeneche2017}. In Section 3, we define quantum Latin isometry squares based on the work of Benoist and Nechita~\cite{Benoist2017}, and investigate a new notion of orthogonality for these objects as well as a connection to unitary error bases.

\ignore{
\BMcomm{Cite our paper introducing QLS and a few others. Mention that a first attempt at orthogonality was introduced in my paper etc.}
\begin{definition}[Quantum Latin square]~\cite{Musto2015}
An $n$-by-$n$ array of unit vectors of the Hilbert space $\C^n$  denoted by $\ket{\psi_{ij}}$ for $i,j\in \{0,...,n-1\}$ is a quantum Latin square if every row and every column are an orthonormal basis. Equivalently if the following equations hold:
\begin{align}
\sum_{i=0}^{n-1}\braket{\psi_{ij}}{\psi_{ik}}\ketbra{ij}{ik}=\sum_{j=0}^{n-1}\braket{\psi_{ij}}{\psi_{kj}}\ketbra{ij}{kj}= \sum_{i,j=0}^{n-1}\ketbra{\psi_{ij}}{\psi_{ij}}\otimes \ketbra{j}{j}=\sum_{i,j=0}^{n-1}\ketbra{\psi_{ij}}{\psi_{ij}}\otimes \ketbra{i}{i}=\I_{n^2}
\end{align}
\end{definition}
}

\section{Quantum Latin squares and orthogonality}

In this section we prove our  main results concerning orthogonal quantum Latin squares. In Section~\ref{sec:basicproperties} we recall the definition of quantum Latin squares, give our new definition of orthogonality, and give a nontrivial example. In Section~\ref{sec:theirorthog} we show that our notion of orthogonality is equivalent to a previous, more complicated definition due to Goyeneche et al~\cite{Goyeneche2017}. In Section~\ref{sec:equivalence} we explore the connection between equivalence and orthogonality of quantum Latin squares, and show that our example of orthogonal quantum Latin squares is not equivalent to a pair of orthogonal classical Latin squares. In Section~\ref{sec:multiplesystems}, we give a simpler definition of orthogonality for families of quantum Latin squares, and show it agrees with that due to Goyeneche et al. In Section~\ref{sec:upperbounds}, we prove an upper bound on the number of mutually orthogonal quantum Latin squares that can exist in any dimension.

\subsection{First definitions}
\label{sec:basicproperties}

We begin with the definition of a quantum Latin square, recently proposed by the present authors~\cite{Musto2015}.

\begin{definition}
A \textit{quantum Latin square (QLS)} $\Psi$ of dimension $n$ is an $n$-by-$n$ array of elements $\ket{\Psi_{ij}} \in \C^n$, such that every row and every column gives an orthonormal basis for $\C^n$. 
\end{definition}

\noindent
A point of notation: when we write $\ket{\Psi_{ij}}$, the indices $i$ and $j$ refer to the row and columns of the array respectively, and take values in the set $[n] = \{0, 1, \ldots, n-1 \}$.

\begin{example}
\label{ex:qls}
Here is a quantum Latin square of dimension 4, given in terms of the computational basis elements \mbox{$\{\ket 0, \ket 1, \ket 2, \ket 3\} \subset \C ^4$:}
{\small\begin{equation*}
\grid{\ket{0} & \ket{1} & \ket{2} & \ket{3}
\\\hline
\frac{1}{\sqrt{2}}(\ket{1}-\ket{2})
& \frac{1}{\sqrt{5}}(i\ket{0}+2\ket{3})
& \frac{1}{\sqrt{5}}(2\ket{0}+i\ket{3})
& \frac{1}{\sqrt{2}}(\ket{1}+\ket{2})
\\\hline
\frac{1}{\sqrt{2}}(\ket{1}+\ket{2})
& \frac{1}{\sqrt{5}}(2\ket{0}+i\ket{3})
& \frac{1}{\sqrt{5}}(i\ket{0}+2\ket{3})
& \frac{1}{\sqrt{2}}(\ket{1}-\ket{2})
\\\hline
\ket{3} & \ket{2} & \ket{1} & \ket{0}}
\end{equation*}}
\end{example}

\noindent
It can readily be checked that along each row, and along each column, the elements form an orthonormal basis for $\C^4$. A \textit{classical Latin square} is a quantum Latin square for which every element of the array is in the computational basis. It is easy to see that classical Latin squares are exactly the ordinary Latin squares studied in combinatorics~\cite{latinsquare}, and so the theory of quantum Latin squares extends this classical theory.

There is a standard notion of orthogonality for classical Latin squares~\cite{Mann1942}. The focus of this paper is the extension of this property to quantum Latin squares, by way of the following new definition.

\begin{definition}
\label{def:orthogonalqls}
Two quantum Latin squares $\Phi, \Psi$ of dimension $n$ are \textit{orthogonal} just when the set of vectors $\{ \ket{\Phi_{ij}} \otimes \ket{\Psi_{ij}} | i,j \in [n]\}$ form an orthonormal basis of the space $\C^n \otimes \C^n$.
\end{definition}

\noindent
We show in Theorem~\ref{thm:equivalent} that this agrees with a more complicated definition recently proposed by Goyeneche et al~\cite{Goyeneche2017}, in terms of partial traces of a tensor expression.

In that paper, it was shown that for classical Latin squares, this agrees with the classical notion of orthogonality. However, no non-classical examples were given of pairs of  quantum Latin squares that are orthogonal . We now rectify this.

\begin{example}[Non-classical orthogonal quantum Latin squares]
\label{example:orthogonal}
Define the unitary matrix $U$ as follows: \begin{equation*}U:=\frac{1}{\sqrt{3}}
\small
\begin{pmatrix}
1 & 1 & 1 & 0 & 0 & 0 & 0 & 0 & 0 \\
1 & e^{\frac{2 \pi i}{3}} & e^{\frac{-2 \pi i}{3}} & 0 & 0 & 0 & 0 & 0 & 0 \\ 
1 & e^{\frac{-2 \pi i}{3}} & e^{\frac{2 \pi i}{3}} & 0 & 0 & 0 & 0 & 0 & 0 \\
0 & 0 & 0 & 1+i & (1-i)/\sqrt{2} & 0 & 0 & 0 & 0   \\
0 & 0 & 0 & -i/\sqrt{2} & 1 & 1/\sqrt{2}+i & 0 & 0 & 0\\
0 & 0 & 0 & 1/\sqrt{2} & i & 1-i/\sqrt{2} & 0 & 0 & 0 \\
0 & 0 & 0 & 0 & 0 & 0 & \sqrt{3/2} & \sqrt{3/2} & 0 \\
0 & 0 & 0 & 0 & 0 & 0 & \sqrt{3/2} & -\sqrt{3/2} & 0\\
0 & 0 & 0 & 0 & 0 & 0 & 0 & 0 & \sqrt{3} \\
\end{pmatrix}\end{equation*}
Then the following arrays are a pair of orthogonal quantum Latin squares of dimension 9:
\begin{align}
 \small\arraycolsep=3pt \grid{ \ket{0}  &  \ket{2}  &  \ket{1}  &  \ket{3}  &  \ket{5}  &  \ket{4}  &  \ket {6}  &  \ket {8}  &  \ket 7 \\ \hline \ket 2 & \ket{1} & \ket{0} & \ket 5  &  \ket 4  &  \ket 3  &  \ket 8  &  \ket 7  &  \ket 6 \\ \hline \ket{1} &  \ket{0} & \ket{2} & \ket{4}  &  \ket{3}  &  \ket 5  &  \ket 7  &  \ket 6  &  \ket 8  \\ \hline \ket{6}  &  \ket{8}  &  \ket{7}  &  \ket{0}  &  \ket 2  &  \ket 1  &  \ket 3  &  \ket 5  &  \ket 4 \\ \hline \ket 8 & \ket 7 & \ket 6 & \ket 2 & \ket 1 & \ket 0 & \ket 5 & \ket 4 & \ket 3 \\ \hline \ket 7 & \ket 6 & \ket 8 & \ket 1 & \ket 0 & \ket 2 & \ket 4 & \ket 3 & \ket 5  \\ \hline U \ket{3} & U\ket{5} & U\ket{4} & \ket 6 & \ket 8 & \ket 7 &  U\ket{0}  & U\ket{2} & U\ket{1}  \\ \hline U\ket{5} & U\ket{4}& U\ket{3} &\ket 8&\ket 7&\ket 6 & U\ket{2} & U\ket{1}& U\ket{0} \\ \hline U\ket{4} & U\ket{3}& U\ket{5} &\ket 7&\ket 6&\ket 8 & U\ket{1}& U\ket{0}&U\ket{2}} & \hspace{10pt} \small\arraycolsep=3pt\grid{ \ket{0}  &  \ket{2}  &  \ket{1}  &  \ket{3}  &  \ket{5}  &  \ket{4}  &  \ket {6}  &  \ket {8}  &  \ket 7 \\ \hline \ket 1 &  \ket{0} &  \ket{2} &  \ket 4  &  \ket 3  &  \ket 5  &  \ket 7  &  \ket 6  &  \ket 8 \\ \hline \ket{2} &  \ket{1} &  \ket{0} &  \ket{5}  &  \ket{4}  &  \ket 3  &  \ket 8  &  \ket 7  &  \ket 6  \\ \hline \ket 3 & \ket 5 & \ket 4 & \ket 6 & \ket 8 & \ket 7 &  U\ket{0}  & U\ket{2} & U\ket{1}  \\ \hline \ket 4 & \ket 3&\ket 5 &\ket 7&\ket 6&\ket 8&U\ket{1}&U\ket{0}&U\ket{2} \\ \hline \ket 5 & \ket 4&\ket 3 &\ket 8&\ket 7&\ket 6&U\ket{2}&U\ket{1}&U\ket{0}\\ \hline U\ket{6}  &  U\ket{8}  &  U\ket{7}  &  \ket{0}  &  \ket 2  &  \ket 1  &  \ket 3  &  \ket 5  &  \ket 4 \\ \hline U\ket{7} & U\ket{6} & U\ket{8} & \ket 1 & \ket 0 & \ket 2 & \ket 4 & \ket 3 & \ket 5 \\ \hline U\ket{8} & U\ket{7} & U\ket{6} & \ket 2 & \ket 1 & \ket 0 & \ket 5 & \ket 4 & \ket 3  }
\end{align}
\end{example}


We now consider some equivalent characterizations of orthogonality, which will be useful later.
\begin{lemma}
\label{lem:conditions}
Two quantum Latin squares $\Phi$, $\Psi$ are orthogonal if and only if one, and hence both, of the following equivalent conditions hold:
\begin{align}
\label{eq:oqlsalg}
\sum_{i,j=0}^{n-1}\ketbra{\Phi_{ij}}{\Phi_{ij}}\otimes\ketbra{\Psi_{ij}}{\Psi_{ij}}=\I_{n^2}
\\
\label{eq:oqlsalg2}
\sum_{i,j,p,q=0}^{n-1}\braket{\Phi_{ij}}{\Phi_{pq}}\braket{\Psi_{ij}}{\Psi_{pq}}\ketbra{ij}{pq}=\I_{n^2}
\end{align}
\end{lemma}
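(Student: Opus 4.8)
The plan is to package both displayed conditions as statements about a single operator, and then exploit the fact that for square matrices a one-sided inverse is automatically two-sided. Write $\ket{w_{ij}} := \ket{\Phi_{ij}} \otimes \ket{\Psi_{ij}}$ for the candidate basis vectors; these are unit vectors in $\C^n \otimes \C^n \cong \C^{n^2}$, and Definition~\ref{def:orthogonalqls} asserts precisely that the $n^2$ vectors $\{\ket{w_{ij}}\}$ form an orthonormal basis. Let $\{\ket{ij} : i,j \in [n]\}$ denote the standard orthonormal basis of $\C^n \otimes \C^n$, and define the operator $A := \sum_{ij} \ketbra{w_{ij}}{ij}$, so that $A\ket{pq} = \ket{w_{pq}}$ for every $p,q$.

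First I would compute the two one-sided products of $A$. A direct calculation gives $A A^\dag = \sum_{ij} \ketbra{w_{ij}}{w_{ij}} = \sum_{ij} \ketbra{\Phi_{ij}}{\Phi_{ij}} \otimes \ketbra{\Psi_{ij}}{\Psi_{ij}}$, which is exactly the left-hand side of~\eqref{eq:oqlsalg}, using that the projector onto a product vector factors as a tensor product of projectors. Similarly $A^\dag A = \sum_{ijpq} \braket{w_{ij}}{w_{pq}} \ketbra{ij}{pq}$, and since the inner product on a tensor product factorizes, $\braket{w_{ij}}{w_{pq}} = \braket{\Phi_{ij}}{\Phi_{pq}} \braket{\Psi_{ij}}{\Psi_{pq}}$, so $A^\dag A$ is exactly the left-hand side of~\eqref{eq:oqlsalg2}. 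Thus \eqref{eq:oqlsalg} says $A A^\dag = \I_{n^2}$ and \eqref{eq:oqlsalg2} says $A^\dag A = \I_{n^2}$.

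The conclusion then follows from a standard linear-algebra fact. Because $A$ is an operator on the finite-dimensional space $\C^{n^2}$, i.e.\ a square matrix, the conditions $AA^\dag = \I_{n^2}$, $A^\dag A = \I_{n^2}$, and ``$A$ is unitary'' are all equivalent (a right inverse of a square matrix is also a left inverse). This immediately yields the equivalence of \eqref{eq:oqlsalg} and \eqref{eq:oqlsalg2}, which is the ``one, and hence both'' clause. Finally, $A$ is unitary if and only if it carries the orthonormal basis $\{\ket{ij}\}$ to an orthonormal basis; since $A\ket{ij} = \ket{w_{ij}}$, this holds precisely when $\{\ket{w_{ij}}\}$ is an orthonormal basis, which is exactly Definition~\ref{def:orthogonalqls}. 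Hence orthogonality is equivalent to each of \eqref{eq:oqlsalg} and \eqref{eq:oqlsalg2}, and the lemma follows.

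The argument is essentially a matter of choosing the right operator $A$, so there is no serious obstacle, only two points needing care. The first is the bookkeeping identifying $AA^\dag$ and $A^\dag A$ with the two displayed sums, which rests on the factorization of both product-vector projectors and tensor-product inner products. The second, and the only place the hypotheses genuinely bite, is the appeal to finite-dimensionality: it is crucial that $A$ is square, with domain and codomain both of dimension $n^2$, which is why we need exactly $n^2$ product vectors. Without this matching dimension count the two conditions would not be equivalent, and one could not pass from a resolution of the identity to orthonormality and spanning simultaneously.
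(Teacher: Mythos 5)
Your proof is correct and is essentially the paper's own argument: your operator $A=\sum_{ij}\ketbra{w_{ij}}{ij}$ is the adjoint of the map $S=\sum_{ij}\ket{ij}\bra{\Phi_{ij}}\bra{\Psi_{ij}}$ used in the paper, and both proofs hinge on the same finite-dimensional fact that a one-sided isometry on $\C^{n^2}$ is unitary. The only (minor, and arguably favourable) difference is that you route condition~\eqref{eq:oqlsalg} through $AA^\dag=\I_{n^2}$ as well, whereas the paper disposes of it directly as the standard ``resolution of the identity by $n^2$ unit vectors iff orthonormal basis'' observation.
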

\begin{proof}
For the first condition, equation \eqref{eq:oqlsalg} says that if we sum up outer products of each element of the family $\{ \ket {\Phi_{ij}} \otimes \ket {\Psi_{ij}} | i,j \in [n]\}$, we get the identity; clearly this is equivalent to the statement that the family yields an orthonormal basis. For the second condition, consider the linear map $S= \sum _{i,j} \ket{ij} \bra {\Phi_{ij}} \bra {\Psi_{ij}}$, an operator on $\C^n \otimes \C^n$. The quantum Latin squares $\Phi,\Psi$ are orthogonal if and only if this map is unitary, since it transports the orthonormal basis $\{ \ket{ \Phi_{ij}} \otimes \ket {\Psi_{ij}} | i,j \in [n]\}$ to the computational basis. Since it is an operator on a finite-dimensional Hilbert space, $S$ is unitary if and only if it is an isometry, and equation~\eqref{eq:oqlsalg2} is the isometry condition.
\end{proof}

Orthogonality of quantum Latin squares is unaffected by conjugation of one of the squares.
\begin{definition}
Given a quantum Latin square $\Psi$, its \textit{conjugate} $\Psi^*$ is the quantum Latin square with entries $(\Psi ^*)_{ij} = (\Psi_{ij})^*$.
\end{definition}

\begin{lemma}\label{lem:qlsconj}
Two quantum Latin squares $\Phi, \Psi$ are orthogonal just when $\Phi^*, \Psi$ are orthogonal.
\end{lemma}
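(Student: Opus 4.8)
The plan is to reduce the statement to the elementary fact that a complex number vanishes precisely when its conjugate does. First I would invoke Lemma~\ref{lem:conditions}, reformulating orthogonality of $\Phi,\Psi$ through the Gram-matrix condition \eqref{eq:oqlsalg2}. The matrix appearing there has $(ij),(pq)$-entry $\braket{\Phi_{ij}}{\Phi_{pq}}\braket{\Psi_{ij}}{\Psi_{pq}}$; its diagonal entries are automatically equal to $1$, since $\ket{\Phi_{ij}}$ and $\ket{\Psi_{ij}}$ are unit vectors, being entries of quantum Latin squares. Hence \eqref{eq:oqlsalg2} holds, i.e.\ $\Phi,\Psi$ are orthogonal, if and only if all off-diagonal entries vanish:
\[
\braket{\Phi_{ij}}{\Phi_{pq}}\braket{\Psi_{ij}}{\Psi_{pq}} = 0 \qquad \text{for all } (i,j) \neq (p,q).
\]

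Next I would record the key identity relating the conjugate square to the original. Writing out components of the Hermitian inner product, entrywise conjugation of both arguments conjugates the result, so for any pair of entries
\[
\braket{\Phi^*_{ij}}{\Phi^*_{pq}} = \overline{\braket{\Phi_{ij}}{\Phi_{pq}}}.
\]
Applying the same reformulation to the pair $\Phi^*,\Psi$, their orthogonality is therefore equivalent to the system $\overline{\braket{\Phi_{ij}}{\Phi_{pq}}}\,\braket{\Psi_{ij}}{\Psi_{pq}} = 0$ for all $(i,j)\neq(p,q)$.

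Finally I would close the argument by comparing the two systems term by term. For fixed indices, the product $\braket{\Phi_{ij}}{\Phi_{pq}}\braket{\Psi_{ij}}{\Psi_{pq}}$ vanishes exactly when one of its two factors vanishes, and replacing the first factor $\braket{\Phi_{ij}}{\Phi_{pq}}$ by its complex conjugate leaves unchanged whether that factor is zero. Thus the two systems of scalar equations are logically equivalent, and $\Phi,\Psi$ are orthogonal precisely when $\Phi^*,\Psi$ are.

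Since the computation is entirely routine, there is no genuine obstacle here; the only point demanding a little care is the bookkeeping in the identity $\braket{\Phi^*_{ij}}{\Phi^*_{pq}} = \overline{\braket{\Phi_{ij}}{\Phi_{pq}}}$, where one must check that conjugating \emph{both} arguments yields the conjugated inner product (rather than, say, the swapped one $\braket{\Phi_{pq}}{\Phi_{ij}}$, which is in fact the same thing), and then observe that this is exactly the operation that preserves the ``one factor is zero'' disjunction.
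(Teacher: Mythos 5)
Your proof is correct and follows essentially the same route as the paper's: both reduce orthogonality via Lemma~\ref{lem:conditions} to the vanishing of the off-diagonal products $\braket{\Phi_{ij}}{\Phi_{pq}}\braket{\Psi_{ij}}{\Psi_{pq}}$, and both observe that conjugating the $\Phi$-factor does not change whether it is zero (the paper phrases this as ``$0,1\in\mathbb R$''). Your explicit statement of the identity $\braket{\Phi^*_{ij}}{\Phi^*_{pq}}=\overline{\braket{\Phi_{ij}}{\Phi_{pq}}}$ is a welcome clarification, but the argument is the same.
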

\begin{proof}
Suppose $\Phi, \Psi$ are orthogonal quantum Latin squares. Then by equation~\eqref{eq:oqlsalg2},  it follows that $\sum_{i,j,p,q=0}^{n-1}\braket{\Phi_{ij}}{\Phi_{pq}}\braket{\Psi_{ij}}{\Psi_{pq}}=\delta_{ip}\delta_{jq}$. So for all $(i,j)\neq (p,q)$ either $\braket{\Phi_{ij}}{\Phi_{pq}}=0$ or $\braket{\Psi_{ij}}{\Psi_{pq}}=0$, and we know that $\braket{\Phi_{ij}}{\Phi_{ij}}=\braket{\Psi_{ij}}{\Psi_{ij}}=1$. Since $0,1\in \mathbb R$, we conclude that $\sum_{i,j,p,q=0}^{n-1}\braket{\Phi^*_{ij}}{\Phi^*_{pq}}\braket{\Psi_{ij}}{\Psi_{pq}}=\delta_{ip}\delta_{jq}$, and hence by equation~\eqref{eq:oqlsalg} it follows that $\Phi^*, \Psi$ are orthogonal. The converse then follows since $(\Phi^*)^*=\Phi$.
\end{proof}

\ignore{
We have the following graphical characterisation of orthogonal quantum Latin squares.
\begin{lemma}\label{lem:oqlsdiagram}
A pair of QLS are orthogonal if and only if the following linear map is an isometry:
\begin{align}\label{lem:oqlsdiag}
\text{[INSERT PIC OF SKEW PROJ QLS PAIR WITH LEGS BENT - DIAG1]}
\end{align}
\end{lemma}
\begin{proof}
The linear operator~\eqref{lem:oqlsdiag} translates algebraically as $\sum_{i,j=0}^{n-1}\ket{ij}\bra{\Phi_{ij}} (\ket{\Psi_{ij}})^T=\sum_{i,j=0}^{n-1}\ket{ij}\bra{\Phi_{ij}}\otimes \bra{\Psi^*_{ij}}$. By Lemma~\ref{lem:qlsconj} this linear operator is an isometry if and only if the operator $S$ from Remark~\ref{rem:oqlsunitary} is an isometry. 
}
\ignore{By inserting a resolution of the identity in equation~\eqref{eq:oqlsalg} we obtain the following:
\begin{align}
\sum_{i,j,p,q,k}\braket{\Phi_{pq}}{k}\braket{k}{\Phi_{ij}}\braket{\Psi_{ij}}{\Psi_{pq}}\ketbra{ij}{pq}&=\I_{n^2}
\\ \Leftrightarrow \text{Tr}_{\Phi}\Big(\sum_{i,j,p,q}\ket{\Phi_{ij}}_\Phi\bra{\Psi_{ij}}\circ\ketbra{\Psi_{pq}}{\Phi_{pq}}_{\Phi}\otimes\ketbra{ij}{pq}\Big)&=\I_{n^2}
\end{align}
Where Tr$_\Phi$ is the partial trace over the unit vectors of the QLS $\Phi$.
Equation~\eqref{} translates graphically as follows:

INSERT PIC - DIAG2

The result follows by the fact that  isometries between finite dimensional Hilbert spaces of the same dimension are always unitary.
\end{proof}
}

\subsection{Relationship to previous notion of orthogonality}
\label{sec:theirorthog}

The following definition of orthogonality for quantum Latin squares has recently been proposed. It is less conceptual than our Definition~\ref{def:orthogonalqls}, and more complex to work with.
\begin{definition}[\cite{Goyeneche2017}, Definition 3]\label{def:grmzortho}
Two quantum Latin squares $\Phi, \Psi$ are \textit{GRMZ-orthogonal}  when for each tensor factor $X \in \{A, B,C\}$, the following holds:
\begin{equation}\label{eq:theirsingleoqls}
\text{Tr}_X \left(\,\sum_{i,p,j=0}^{n-1} \ket {\Phi_{ij}} \bra {\Phi_{ij}}_A \otimes \ket {\Psi_{pj}} \bra {\Psi_{pj}}_B \otimes \ketbra{ i}{p}_C\right)=\mathbb{I}_{n^2}
\end{equation}
\end{definition}

\noindent
Note that in the presentation of this definition we have restricted the original definition to orthogonality between pairs of quantum Latin squares (please refer to Remark~\ref{rem:entqls}). We now show that Definition~\ref{def:grmzortho} is equivalent to our Definition~\ref{def:orthogonalqls}.

\begin{theorem}
\label{thm:equivalent}
Two quantum Latin squares $\Phi, \Psi$ are orthogonal if and only if they are GRMZ-orthogonal.
\end{theorem}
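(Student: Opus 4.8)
The plan is to unpack Definition~\ref{def:grmzortho} into three explicit operator equations, one for each choice of traced-out factor $X \in \{A,B,C\}$, and to compare each with the single condition~\eqref{eq:oqlsalg} furnished by Lemma~\ref{lem:conditions}. The strategy is to isolate the one condition among the three that carries the real content and identify it directly with orthogonality, and then to dispatch the remaining two conditions using only the defining row and column orthonormality of quantum Latin squares, together with the overlap relations already extracted in the proof of Lemma~\ref{lem:qlsconj}.

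First I would treat the factor $X = C$, the register carrying the index data. Since $\text{Tr}_C(\ketbra{i}{p}) = \braket{p}{i} = \delta_{ip}$, tracing out $C$ collapses the otherwise independent sums over $i$ and $p$ onto their diagonal, so that the operator appearing in Definition~\ref{def:grmzortho} becomes exactly $\sum_{i,j}\ketbra{\Phi_{ij}}{\Phi_{ij}}\otimes\ketbra{\Psi_{ij}}{\Psi_{ij}}$. Hence the $X=C$ condition is literally equation~\eqref{eq:oqlsalg}, which by Lemma~\ref{lem:conditions} is equivalent to orthogonality in the sense of Definition~\ref{def:orthogonalqls}. Because GRMZ-orthogonality demands in particular that the $X=C$ equation hold, this observation already yields the implication $\text{GRMZ-orthogonal} \Rightarrow \text{orthogonal}$.

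For the converse I would assume $\Phi,\Psi$ orthogonal and verify the two remaining conditions. Here I would expand the partial trace over $A$ (respectively $B$) in the computational basis and simplify using two facts: that each $\ket{\Phi_{ij}}$ and $\ket{\Psi_{ij}}$ is a unit vector, so tracing a single factor contributes $\braket{\Phi_{ij}}{\Phi_{ij}} = 1$; and that each row of $\Phi$ and of $\Psi$ is an orthonormal basis, so that $\sum_j \ketbra{\Phi_{ij}}{\Phi_{ij}} = \sum_j \ketbra{\Psi_{ij}}{\Psi_{ij}} = \I_n$ for every fixed row index. The contributions in which the two row indices coincide should then reassemble, via this row orthonormality, into the identity, while the contributions in which the row indices differ must be shown to cancel; it is precisely here that orthogonality is used, through the relation $\braket{\Phi_{ij}}{\Phi_{pq}}\braket{\Psi_{ij}}{\Psi_{pq}} = 0$ for $(i,j)\neq(p,q)$ established in the proof of Lemma~\ref{lem:qlsconj}.

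I expect the main obstacle to be exactly this last cancellation in the $X=A$ and $X=B$ cases: one must check that the off-diagonal contributions indexed by distinct rows genuinely vanish, rather than merely summing to something with the correct trace, and this forces careful use of the orthogonality overlap relations rather than the raw orthonormality of individual rows alone. Once both conditions are confirmed, we obtain $\text{orthogonal} \Rightarrow \text{GRMZ-orthogonal}$, and combining this with the preceding paragraph establishes the claimed equivalence.
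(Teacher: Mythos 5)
Your treatment of the $X=C$ case is exactly the paper's: tracing out the index register produces $\braket{p}{i}=\delta_{ip}$, collapses the sum to $\sum_{i,j}\ketbra{\Phi_{ij}}{\Phi_{ij}}\otimes\ketbra{\Psi_{ij}}{\Psi_{ij}}$, and Lemma~\ref{lem:conditions} identifies that with orthogonality. The gap is in your plan for the $X=A$ and $X=B$ conditions. Take $X=B$: the resulting operator is $\sum_{i,j,p}\braket{\Phi_{ij}}{\Phi_{pj}}\,\ketbra{\Psi_{ij}}{\Psi_{pj}}\otimes\ketbra{i}{p}$, in which the $\Psi$ data survives as an \emph{outer product}, not an inner product. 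The overlap relation from Lemma~\ref{lem:qlsconj}, namely $\braket{\Phi_{ij}}{\Phi_{pq}}\braket{\Psi_{ij}}{\Psi_{pq}}=0$ for $(i,j)\neq(p,q)$, only tells you that at least one of the two inner products vanishes; in the case where it is the $\Psi$ factor that vanishes, the operator $\ketbra{\Psi_{ij}}{\Psi_{pj}}$ is still a nonzero rank-one map and the term does not die. So the cancellation you single out as ``the main obstacle'' cannot be carried out by the tool you propose to use for it.

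What actually kills the off-diagonal terms is simpler, and it is not orthogonality at all: the two $\Phi$ entries in the scalar $\braket{\Phi_{ij}}{\Phi_{pj}}$ share the column index $j$, so column orthonormality of $\Phi$ alone gives $\braket{\Phi_{ij}}{\Phi_{pj}}=\delta_{ip}$. The sum then reduces to $\sum_{i,j}\ketbra{\Psi_{ij}}{\Psi_{ij}}\otimes\ketbra{i}{i}$, which equals $\I_n\otimes\I_n$ by row orthonormality of $\Psi$; the $X=A$ case is symmetric. In other words, the $X=A$ and $X=B$ conditions are redundant --- they hold for \emph{every} pair of quantum Latin squares, orthogonal or not --- and the entire content of GRMZ-orthogonality sits in the $X=C$ equation you already handled. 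Your logical architecture (GRMZ $\Rightarrow$ orthogonal via $X=C$; orthogonal $\Rightarrow$ the remaining conditions) would still yield the theorem once the converse step is repaired, but the repair consists of replacing the orthogonality overlap relations by the raw column orthonormality of one square together with the row orthonormality of the other.
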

\begin{proof}
We first consider equation~\eqref{eq:theirsingleoqls} for the case $X=C$, which yields the following equation: 
\begin{equation}
\sum_{i,j,p=0}^{n-1}\ketbra{\Phi_{ij}}{\Phi_{pj}}\otimes \ketbra{\Psi_{ij}}{\Psi_{pj}}\braket{p}{i}=\sum_{i,j=0}^{n-1}\ketbra{\Phi_{ij}}{\Phi_{ij}}\otimes \ketbra{\Psi_{ij}}{\Psi_{ij}}\braket{ij}{ij}=\I_{n^2}
\end{equation}
This corresponds to our equation~\eqref{eq:oqlsalg}. By Lemma~\ref{lem:conditions}, it will hold if and only if $\Phi, \Psi$ are orthogonal.

We now show that the trace conditions over $X=A$ and $X=B$ in the GRMZ-orthogonality definition are redundant, in the sense that they hold automatically for all pairs of quantum Latin squares $\Phi, \Psi$, regardless of orthogonality. We analyze the case that $X=B$; the case $X=A$ is similar. The trace condition yields the following equation:
\begin{align*}
\sum_{i,j,p=0}^{n-1} \braket{\Phi_{ij}}{\Phi_{pj}}\ketbra{\Psi_{ij}}{\Psi_{pj}}\otimes \ketbra{i}{p}=\I_{n^2}
\end{align*}
But this equation holds for any pair of quantum Latin squares $\Phi, \Psi$, as follows:
\begin{align*}
&\sum_{i,j,p=0}^{n-1} \braket{\Phi_{ij}}{\Phi_{pj}}\ketbra{\Psi_{ij}}{\Psi_{pj}}\otimes \ketbra{i}{p}
= \sum_{i,j,p=0}^{n-1}\delta_{ip}\ketbra{\Psi_{ij}}{\Psi_{pj}} \otimes \ketbra i p
= \sum_{i,j=0}^{n-1} \ketbra{\Psi_{ij}}{\Psi_{ij}} \otimes \ketbra i i
\\&= \sum _{i=0}^{n-1} \left( \sum _{j=0} ^{n-1}\ketbra{\Psi_{ij}}{\Psi_{ij}} \right) \otimes \ketbra i i
= \sum _{i=0} ^{n-1} \I_n \otimes \ketbra i i
= \I_n \otimes \left ( \sum _{i=0} ^{n-1} \ketbra i i \right)
= \I_n \otimes \I_n
\end{align*}
Here the first equality uses the fact that $\Phi$ is a QLS, the second equality uses the definition of the Kronecker delta function, the third equality rearranges the sum, the fourth equality uses the fact that $\Psi$ is a QLS, and the final equalities are trivial algebraic manipulations. This completes the proof.
\end{proof} 

\subsection{Equivalence and orthogonality}
\label{sec:equivalence}

Two classical Latin squares are said to be equivalent if one can be transformed into the other by permutations of the rows, columns or computational basis state labels. Similarly, there is a notion of equivalence between quantum Latin squares~\cite{Musto2015}, which we now recall.
\begin{definition}
Two quantum Latin squares $\Phi, \Psi$ of dimension $n$ are \textit{equivalent} if there exists some unitary operator $U$ on $\C^n$, family of modulus-1 complex numbers $c_{ij}$, and permutations $\sigma, \tau \in S_n$, such that the following holds for all $i,j \in [n]$:
\begin{equation}
c_{ij}U\ket{\Phi_{\sigma(i), \tau(j)}}=\ket{\Psi_{ij}}
\end{equation} 
\end{definition} 

\noindent
Orthogonality  is preserved by  taking potentially different equivalences of each QLS, as long as the same pair of permutations are used.

\begin{lemma}\label{lem:equivoqls}
Given quantum Latin squares $\Phi, \Psi, \Phi', \Psi'$ of dimension $n$, unitary operators $U,V$ on $\C^n$, families of modulus-1 complex numbers $c_{ij}, d_{ij}$, and permutations $\sigma, \tau \in S_n$ such that
\begin{align}
\ket{\Phi'_{ij}} &:= c_{ij}U\ket{\Phi_{\sigma(i), \tau(j)}}
&
\ket{\Psi'_{ij}} &:= d_{ij} V\ket{\Psi_{\sigma(i), \tau(j)}}
\end{align}
then $\Phi, \Psi$ are orthogonal if and only if $\Phi', \Psi'$ are orthogonal.
\end{lemma}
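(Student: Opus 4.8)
The plan is to avoid reasoning directly about orthonormal bases and instead work with the algebraic characterization from equation~\eqref{eq:oqlsalg} of Lemma~\ref{lem:conditions}, namely that $\Phi,\Psi$ are orthogonal exactly when $\sum_{ij}\ketbra{\Phi_{ij}}{\Phi_{ij}}\otimes\ketbra{\Psi_{ij}}{\Psi_{ij}}=\I_{n^2}$. The key observation is that the two defining operators, for $(\Phi',\Psi')$ and for $(\Phi,\Psi)$, are unitary conjugates of one another, so this single computation settles both implications at once.

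First I would substitute the definitions into the rank-one projectors. Because each $c_{ij}$ has modulus $1$, the phase cancels against its own conjugate, giving $\ketbra{\Phi'_{ij}}{\Phi'_{ij}}=|c_{ij}|^2\,U\ketbra{\Phi_{\sigma(i),\tau(j)}}{\Phi_{\sigma(i),\tau(j)}}U^\dagger = U\ketbra{\Phi_{\sigma(i),\tau(j)}}{\Phi_{\sigma(i),\tau(j)}}U^\dagger$, and similarly $\ketbra{\Psi'_{ij}}{\Psi'_{ij}}=V\ketbra{\Psi_{\sigma(i),\tau(j)}}{\Psi_{\sigma(i),\tau(j)}}V^\dagger$. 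Taking the tensor product of these, summing over $i,j$, and pulling the fixed operator $U\otimes V$ out to the left and $(U\otimes V)^\dagger$ to the right yields
\begin{equation*}
\sum_{i,j}\ketbra{\Phi'_{ij}}{\Phi'_{ij}}\otimes\ketbra{\Psi'_{ij}}{\Psi'_{ij}}
=(U\otimes V)\left(\sum_{i,j}\ketbra{\Phi_{\sigma(i),\tau(j)}}{\Phi_{\sigma(i),\tau(j)}}\otimes\ketbra{\Psi_{\sigma(i),\tau(j)}}{\Psi_{\sigma(i),\tau(j)}}\right)(U\otimes V)^\dagger.
\end{equation*}

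Next I would reindex the inner sum. Since $\sigma,\tau\in S_n$ are permutations, the map $(i,j)\mapsto(\sigma(i),\tau(j))$ is a bijection of $[n]\times[n]$, so the inner sum equals $M:=\sum_{p,q}\ketbra{\Phi_{pq}}{\Phi_{pq}}\otimes\ketbra{\Psi_{pq}}{\Psi_{pq}}$, the operator governing orthogonality of $\Phi,\Psi$. The whole expression is therefore $(U\otimes V)\,M\,(U\otimes V)^\dagger$. Finally I would invoke invertibility of conjugation by the unitary $U\otimes V$: this operator equals $\I_{n^2}$ if and only if $M=\I_{n^2}$, since $(U\otimes V)M(U\otimes V)^\dagger=\I_{n^2}$ forces $M=(U\otimes V)^\dagger\I_{n^2}(U\otimes V)=\I_{n^2}$ and conversely. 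By equation~\eqref{eq:oqlsalg} this is precisely the statement that $\Phi',\Psi'$ are orthogonal iff $\Phi,\Psi$ are. The only subtlety, and it is a minor one, is the bookkeeping in the reindexing step, together with the clean observation that the phases $c_{ij},d_{ij}$ drop out simply because they are unit scalars; no assumption on their joint structure is needed.
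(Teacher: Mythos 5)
Your proof is correct, but it takes a genuinely different route from the paper's. The paper works with the inner-product characterization, equation~\eqref{eq:oqlsalg2}: it writes out $\braket{\Phi_{mn}}{\Phi_{ij}}\braket{\Psi_{mn}}{\Psi_{ij}}=\delta_{im}\delta_{jn}$ and chains equivalences, inserting $U^\dag U$ and $V^\dag V$ and absorbing the phases; there the cancellation is slightly delicate, since one must observe that $\delta_{i'm'}\delta_{j'n'}c_{i'j'}c^*_{i'j'}=\delta_{i'm'}\delta_{j'n'}c_{m'n'}c^*_{i'j'}$ (the Kronecker deltas let you swap which phase appears precisely because off-diagonal terms vanish). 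You instead use the resolution-of-identity characterization, equation~\eqref{eq:oqlsalg}, where the phases die instantly as $|c_{ij}|^2=1$ inside each rank-one projector, the permutations are absorbed by a single reindexing of the sum, and the two conditions are related by conjugation with the unitary $U\otimes V$. Your version is more compact and conceptually cleaner (one global conjugation rather than entrywise bookkeeping); the paper's entrywise form has the mild advantage that it exposes exactly how individual inner products transform, which is the shape of statement reused later (for instance in the proof of Theorem~\ref{thm:upperbound}). Both arguments are complete and rest only on Lemma~\ref{lem:conditions}.
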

\begin{proof}
By Lemma~\ref{lem:conditions} we have the following for all $i,j,m,n$:
\begin{align*}
& \braket{\Phi_{mn}}{\Phi_{ij}} \braket{\Psi_{mn}}{\Psi_{ij}} =\delta_{im}\delta_{jn}
\\ & \Leftrightarrow \bra{\Phi_{m'n'}}U^\dag\circ U\ket{\Phi_{i'j'}}\bra{\Psi_{m'n'}}V^\dag\circ V\ket{\Psi_{i'j'}}=\delta_{i'm'}\delta_{j'n'}c_{i'j'}c^*_{i'j'}d_{i'j'}d^*_{i'j'}=\delta_{i'm'}\delta_{j'n'}c_{m'n'}c^*_{i'j'}d_{m'n'}d^*_{i'j'}
\\ & \Leftrightarrow \bra{\Phi_{m'n'}}U^\dag c^*_{m'n'} c_{i'j'}U\ket{\Phi_{i'j'}}\bra{\Psi_{m'n'}}V^\dag d^*_{m'n'} d_{i'j'}V\ket{\Psi_{i'j'}} =\delta_{i'm'}\delta_{j'n'}
\\ & \Leftrightarrow \braket{\Phi'_{mn}}{\Phi'_{ij}}\braket{\Psi'_{mn}}{\Psi'_{ij}} =\delta_{im}\delta_{jn}
\end{align*} 
Where $\sigma(i)=i',\tau(j)=j',\sigma(m)=m'$ and $\tau(n)=n'$.  
\end{proof}

We now show that the pair of orthogonal quantum Latin squares illustrated in Example~\ref{example:orthogonal} are not equivalent to any pair of orthogonal Latin squares.
\begin{proposition}
\label{prop:notequiv}
The orthogonal quantum Latin squares of Example~\ref{example:orthogonal} are not equivalent to a pair of orthogonal classical Latin squares.
\end{proposition}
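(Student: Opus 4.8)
The plan is to prove the stronger fact that the first square $\Phi$ of Example~\ref{example:orthogonal} is, on its own, not equivalent to any classical Latin square. Since equivalence of a pair of quantum Latin squares requires in particular that each component be equivalent to the corresponding classical square (with the common permutations $\sigma,\tau$ only strengthening the hypothesis), this at once rules out equivalence of the pair $(\Phi,\Psi)$ to any pair of classical Latin squares, orthogonal or not.

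First I would unfold the definition of equivalence for $\Phi$. Suppose $\Phi$ were equivalent to a classical Latin square $L$, so that there is a unitary $W$ on $\C^9$, scalars $c_{ij}$ of modulus $1$, and permutations $\sigma,\tau\in S_9$ with $\ket{\Phi_{ij}}=c_{ij}\,W\ket{L_{\sigma(i),\tau(j)}}$ for all $i,j\in[9]$. Because $L$ is classical, each $\ket{L_{\sigma(i),\tau(j)}}$ is a computational basis vector, and hence every entry of $\Phi$ is a phase times a column of $W$.

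The key observation is that $W$ is then forced to be a monomial (generalized permutation) matrix. Reading off the top row of $\Phi$, the set $\{\ket{\Phi_{0j}}\mid j\in[9]\}$ is exactly the full computational basis $\{\ket 0,\dots,\ket 8\}$, while $\{\ket{L_{\sigma(0),\tau(j)}}\mid j\in[9]\}$ is a row of $L$ and so is also the full basis. Thus $W$ sends each computational basis vector to a phase multiple of a computational basis vector, which forces every column of $W$ to be a phase times a standard basis vector. The contradiction then comes from any entry of $\Phi$ of the form $U\ket k$: taking $\Phi_{60}=U\ket 3$, the definition of $U$ gives $U\ket 3=\tfrac{1}{\sqrt 3}\big((1+i)\ket 3-\tfrac{i}{\sqrt 2}\ket 4+\tfrac{1}{\sqrt 2}\ket 5\big)$, which has several nonzero coordinates and so is proportional to no single basis vector. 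Yet the equivalence forces $U\ket 3=c_{60}\,W\ket m$ for some index $m$, which by monomiality of $W$ is a phase times a basis vector — a contradiction. Hence no such $L,W,c_{ij},\sigma,\tau$ exist.

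I expect the only real content to be the monomiality step; the remainder is bookkeeping and direct inspection of the array. The main thing to get right is to locate a row (or column) of $\Phi$ that already exhausts the computational basis, which pins down $W$ as monomial, together with a single genuinely superposed entry $U\ket k$ to derive the contradiction. The same argument applies verbatim to $\Psi$, but one square suffices.
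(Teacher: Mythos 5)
Your proof is correct, but it takes a genuinely different route from the paper's. The paper argues via unitary invariance of inner products: supposing $\ket{\eta_{ij}}=c_{ij}V\ket{\Phi_{ij}}$ were all computational basis vectors, every $\braket{\eta_{mn}}{\eta_{ij}}$ would have to be $0$ or $1$, yet $\braket{\eta_{03}}{\eta_{62}}=c^*_{03}c_{62}\bra{3}U\ket{4}$ has modulus neither $0$ nor $1$ --- a two-entry computation that never analyses the structure of the equivalence unitary. You instead use the fact that the top row of $\Phi$ exhausts the computational basis (as does the corresponding row of any classical Latin square) to force $W$ to be monomial, and then get the contradiction from a single genuinely superposed entry such as $U\ket{3}$. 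Both arguments are sound, and both reduce the claim about the pair to the claim about $\Phi$ alone in the same way. The trade-off: the paper's inner-product argument is slightly more robust, since it would still work for a square with no fully classical row, needing only one pair of entries whose inner product has modulus outside $\{0,1\}$; your argument buys a cleaner structural picture --- the unitary is completely pinned down as a generalized permutation --- at the cost of relying on the specific feature that an entire row of $\Phi$ consists of computational basis vectors.
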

\begin{proof}
It is enough to show that the left-hand quantum Latin square of Example~\ref{example:orthogonal}, which we call $\Phi$, is not equivalent to a classical Latin square. Clearly no permutation of the rows or columns could transform $\Phi$ into a classical Latin square. Suppose for a contradiction that there exists a unitary operator  $V$ and a set of phases $c_{ij}$ such that $\ket{\eta_{ij}}:=c_{ij}V\ket{\Phi_{ij}}$ are all computational basis elements, and therefore yield a classical Latin square. Then for all $i,j,m,n$, we must have $\braket{\eta_{mn}}{\eta_{ij}}=0$ or $1$. We choose $m=0$, $n=3$, $i=6$ and $j=2$ to obtain $\braket{\eta_{03}}{\eta_{62}}=c^*_{03}c_{62}^{}\bra{\Phi_{03}}V^\dag V \ket{\Phi_{62}}=c^*_{03}c_{62}^{}\braket{\Phi_{03}}{\Phi_{62}}=c^*_{03}c_{62}^{}\bra{3}U\ket{4}=c^*_{03}c_{62}(1+i)/\sqrt{3}$. But since the $c_{ij}$ have modulus 1, this can never equal $0$ or $1$, and the contradiction is established.
\end{proof}

\ignore{
\subsection{Building 2-uniform tensors}
\label{sec:building2uniform}

We use our definition of orthogonal quantum Latin square, together with the techniques of categorical quantum mechanics, to prove that orthogonal quantum Latin squares yield 2\-uniform tensors.
\begin{theorem}
\label{thm:orthogonalperfect}
Given a  pair of orthogonal QLS the following linear map is a  4-valent error correcting tensor.
\[
\text{[INSERT PIC SKEW PROJECTIVE QLS SHADED CALC - DIAG3]}
\]
\end{theorem}
\begin{proof}
Equation~\eqref{} is the diagonal composition of two biunitaries and is thus a biunitary. By Lemma~\ref{lem:oqlsdiagram}, equation~\eqref{} is therefore an error correcting tensor.
\end{proof}
}

\subsection{Generalization to multiple systems}
\label{sec:multiplesystems}

We now extend this definition to families of quantum Latin squares, generalising mutually orthogonal Latin squares. In particular, we show that no essentially new concept is introduced, with the existing pairwise orthogonality property being sufficient.
\begin{definition}[MOQLS]\label{def:moqls}
A family of $m$ quantum Latin squares $\{\Phi^k|k \in [m]\}$ are \textit{mutually orthogonal} if they are pairwise orthogonal. 
\end{definition}
We now present the definition of mutually orthogonal quantum Latin squares due to Goyeneche et al. As usual we only consider the definition with respect to families of quantum Latin squares (see Remark~\ref{rem:entqls}).
\begin{definition}[GRMZ-MOQLS]
A family of $m$ quantum Latin squares $\{\Phi^k|k \in [m]\}$ are \textit{GRMZ--mutually orthogonal} when the following equations hold, where $X$ indicates a partial trace over any of the $m+1$ subsystems:
\begin{equation}\label{eq:theiroqls}
\text{Tr}_X \left(\,\sum_{i,j,p,q=0}^{n-1}\ket{\Phi^0_{ij}}\bra{\Phi^0_{pq}}\otimes\ket{\Phi^1_{ij}}\bra{\Phi^ 1_{pq}}\otimes...\otimes\ket{\Phi^{m-1}_{ij}}\bra{\Phi^{m-1}_{pq}}\otimes\ket{ij}\bra{pq}\right)=\mathbb{I}_{n^2}
\end{equation}
\end{definition}
\noindent This definition is equivalent to Definition~\ref{def:moqls}.
\begin{proposition}[MOQLS\,=\,GRMZ-MOQLS]
A family of $m$ quantum Latin squares $\{\Phi^k|k \in [m]\}$ are mutually orthogonal just when they are GRMZ--mutually orthogonal.
\end{proposition}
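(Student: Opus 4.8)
The plan is to reduce the statement to the pairwise case and then mirror the computation in the proof of Theorem~\ref{thm:equivalent}. By Definition~\ref{def:moqls}, ``mutually orthogonal'' means ``pairwise orthogonal,'' so it suffices to prove that GRMZ--mutual orthogonality is equivalent to the orthogonality of every pair $\Phi^a, \Phi^b$ in the family. The engine throughout is Lemma~\ref{lem:conditions}: orthogonality of a pair is equivalent to $\sum_{ij}\ketbra{\Phi^a_{ij}}{\Phi^a_{ij}}\otimes\ketbra{\Phi^b_{ij}}{\Phi^b_{ij}}=\I_{n^2}$ (equation~\eqref{eq:oqlsalg}), and equally to the inner-product form $\braket{\Phi^a_{ij}}{\Phi^a_{pq}}\braket{\Phi^b_{ij}}{\Phi^b_{pq}}=\delta_{ip}\delta_{jq}$ (equation~\eqref{eq:oqlsalg2}). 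I would expand the partial trace in equation~\eqref{eq:theiroqls} for each choice of traced subsystem $X$ and simplify the resulting tensor expression using only the defining property of a quantum Latin square, namely that each row is an orthonormal basis, so that $\sum_{j}\ketbra{\Phi^k_{ij}}{\Phi^k_{ij}}=\I_n$ for every fixed $i$ and every $k$.

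The traces split into two kinds, exactly as in Theorem~\ref{thm:equivalent}. The traces that retain the index subsystem but drop down to a single ``square'' subsystem collapse, via the Kronecker delta $\braket{ij}{pq}=\delta_{ip}\delta_{jq}$ from the index factor followed by the quantum Latin square resolution of the identity, to a statement that holds for every family of quantum Latin squares regardless of orthogonality; these are the analogues of the redundant $X=A$ and $X=B$ cases, and I would discharge them by a direct chain of equalities like the one displayed in the proof of Theorem~\ref{thm:equivalent}. The orthogonality content lives entirely in the bipartite marginal retaining two square subsystems $a,b$: tracing out everything else collapses the index pair and leaves $\sum_{ij}\ketbra{\Phi^a_{ij}}{\Phi^a_{ij}}\otimes\ketbra{\Phi^b_{ij}}{\Phi^b_{ij}}$, an operator on $\C^n\otimes\C^n$, whose equality with $\I_{n^2}$ is precisely equation~\eqref{eq:oqlsalg} for the pair $(a,b)$. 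For the forward direction this makes things immediate: if every pair is orthogonal then equation~\eqref{eq:oqlsalg2} gives $\braket{\Phi^a_{ij}}{\Phi^a_{pq}}\braket{\Phi^b_{ij}}{\Phi^b_{pq}}=\delta_{ip}\delta_{jq}$ for each pair, so for $(i,j)\neq(p,q)$ some inner product in the product over all $k$ vanishes, the off-diagonal terms of every trace drop out, and each partial trace equals the identity.

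The hard part will be the converse, and here the bookkeeping is where the care is needed. The single trace over the joint index subsystem controls only the product $\prod_k\braket{\Phi^k_{ij}}{\Phi^k_{pq}}$, which is \emph{a priori} weaker than the vanishing of each individual pairwise product: a product can vanish because one unrelated factor does, without the chosen pair being orthogonal. The key step is therefore to recognize that the GRMZ conditions, read across all subsystems and in particular on the bipartite marginals of pairs of square subsystems, recover equation~\eqref{eq:oqlsalg} for each pair $(a,b)$ separately, at which point Lemma~\ref{lem:conditions} upgrades this to genuine pairwise orthogonality. Confirming that the index-involving traces contribute nothing beyond what the quantum Latin square property already forces, and that the two-square marginals are exactly the content-bearing conditions, is the main obstacle; the conceptual work is then entirely absorbed into Lemma~\ref{lem:conditions}.
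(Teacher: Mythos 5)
Your proposal is correct and follows essentially the same route as the paper: the marginals keeping two square subsystems are each equivalent, via Lemma~\ref{lem:conditions}, to orthogonality of that pair, the marginals involving an index subsystem hold automatically by the quantum Latin square property, and the all-index marginal follows from pairwise orthogonality. The only (harmless) deviation is in that last case, where you kill the off-diagonal terms using a single orthogonal pair, whereas the paper splits the $m$ squares into pairs and argues by parity.
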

\begin{proof}
We label the $k$th QLS system by $A_k$ and the two other systems in equation~\eqref{eq:theiroqls} as $\alpha$ and $\beta$. So $X$ can range over $m$ element subsets of $\{A_0,A_1,...,A_{m-1},\alpha, \beta\}$. We will label such sets by the two elements that are NOT included so for example $(A_g,A_h)=\{A_0,...,A_{g-1},A_{g+1},...,A_{h-1},A_{h+1},...A_{m-1},\alpha,\beta\}$. 

First we show that for all $g$ and $h$, substituting $X=(A_g,A_h)$ into equation~\eqref{eq:theiroqls} reduces to equation~\eqref{eq:oqlsalg} and so by varying $g$ and $h$ we obtain Definition~\ref{def:moqls}. Let $X=(A_g,A_h)$ then we have:
\begin{align*}
&\sum_{i,j,p,q=0}^{n-1}\braket{\Phi^0_{pq}}{\Phi^0_{ij}} \cdots \braket{\Phi^{g-1}_{pq}}{\Phi^{g-1}_{ij}}\ketbra{\Phi^g_{pq}}{\Phi^g_{ij}}\braket{\Phi^{g+1}_{pq}}{\Phi^{g+1}_{ij}} \cdots \braket{\Phi^{g-1}_{pq}}{\Phi^{h-1}_{ij}}\ketbra{\Phi^h_{pq}}{\Phi^h_{ij}}\braket{\Phi^{h+1}_{pq}}{\Phi^{h+1}_{ij}}
\\
& \qquad \cdots \braket{\Phi^{m-1}_{pq}}{\Phi^{m-1}_{ij}}\braket{pq}{ij}=\I_{n^2}
\qquad\Leftrightarrow\qquad \sum_{i,j=0}^{n-1}\ketbra{\Phi^g_{ij}}{\Phi^g_{ij}} \otimes\ketbra{\Phi^h_{ij}}{\Phi^h_{ij}}=\I_{n^2}
\end{align*} 
 Thus by Lemma~\ref{lem:conditions}, equation~\eqref{eq:theiroqls} with $X=(A_g,A_h)$ holds if and only if $\Phi^g$ and $\Phi^h$ are orthogonal.
We now show that Definition~\ref{def:moqls} implies equation~\eqref{eq:theiroqls}  for all other possible values of $X$.

 Since $\sum_{j=0}^{n-1}\braket{\Phi_{ij}}{\Phi_{pj}}=\delta_{ip}$ by the quantum Latin square property, we have that for  all $k$, substituting $X=(A_k,\alpha)$ into equation~\eqref{eq:theiroqls} reduces to $ \sum_{i,j=0}^{n-1}\ketbra{\Phi^k_{ij}}{\Phi^k_{ij}}\otimes \ketbra{j}{j}=\I_{n^2}$, which holds for all QLS. Similarly by setting $X=(A_k,\beta)$ we obtain $ \sum_{i,j=0}^{n-1}\ketbra{\Phi^k_{ij}}{\Phi^k_{ij}}\otimes \ketbra{j}{j}=\I_{n^2}$ which again holds for all QLS. Finally we are left with $X=(\alpha,\beta)$, which gives the following:
\begin{align}\label{eq:abcase}
\sum_{i,j,p,q=0}^{n-1}\braket{\Phi_{ij}^0}{\Phi_{pq}^0}...\braket{\Phi_{ij}^{m-1}}{\Phi_{pq}^{m-1}}\ketbra{ij}{pq}=\I_{n^2}
\end{align}
Split the $m$ QLS into pairs. Equation~\eqref{eq:oqlsalg2} is equivalent to $\sum_{i,j,p,q=0}^{n-1}\braket{\Phi_{ij}}{\Phi_{pq}}\braket{\Psi_{ij}}{\Psi_{pq}}=\delta_{ip}\delta_{qj}$. If $m$ is even then the LHS of equation~\eqref{eq:abcase} becomes $\sum_{i,j,p,q=0}^{n-1}\delta_{ip}\delta_{pq}\ketbra{ij}{pq}$ which is a resolution of the identity. For $m$ odd we have $\sum_{i,j}^{n-1}\braket{\Phi^{m-1}_{ij}}{\Phi^{m-1}_{ij}}\ketbra{ij}{ij}$ which again is a resolution of the identity since all entries of a QLS are unit vectors. 
\end{proof}


It follows as a corollary of Lemma~\ref{lem:equivoqls}  that MOQLS are preserved by equivalences in the same way as pairs of orthogonal QLS.
\begin{corollary}\label{lem:moqlsunitr}
Given a set of  MOQLS $\Phi^k$, the set of quantum Latin squares with entries $c_{ij}U_k\ket{\Phi_{\sigma(i),\tau(j)}^k}$ are also mutually orthogonal, for any set of unitary operators $U_k$, complex phases $c_{ij}$ and permutations~$\sigma, \tau$.
\end{corollary}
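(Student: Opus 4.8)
The plan is to reduce the statement directly to Lemma~\ref{lem:equivoqls}, using the fact that by Definition~\ref{def:moqls} mutual orthogonality means nothing more than pairwise orthogonality. Write $\ket{\Phi^{\prime k}_{ij}} := c_{ij} U_k \ket{\Phi^k_{\sigma(i),\tau(j)}}$ for the entries of the transformed arrays. First I would note that each $\Phi^{\prime k}$ is equivalent to $\Phi^k$ in the sense of the equivalence relation recalled above, being obtained from it by a unitary, a family of modulus-one phases, and a common pair of row/column permutations; in particular each $\Phi^{\prime k}$ is again a quantum Latin square, so the statement is well-posed.

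It then remains only to verify pairwise orthogonality. Fix $g \neq h$ in $[m]$, and apply Lemma~\ref{lem:equivoqls} to the four quantum Latin squares $\Phi^g, \Phi^h, \Phi^{\prime g}, \Phi^{\prime h}$, instantiating the unitary operators as $U_g$ and $U_h$, both families of modulus-one scalars as the single family $c_{ij}$, and the permutations as $\sigma$ and $\tau$. With these choices the defining relations of the lemma read $\ket{\Phi^{\prime g}_{ij}} = c_{ij} U_g \ket{\Phi^g_{\sigma(i),\tau(j)}}$ and $\ket{\Phi^{\prime h}_{ij}} = c_{ij} U_h \ket{\Phi^h_{\sigma(i),\tau(j)}}$, exactly matching our definitions. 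Since $\{\Phi^k\}$ are mutually orthogonal, $\Phi^g$ and $\Phi^h$ are orthogonal, and so the lemma delivers orthogonality of $\Phi^{\prime g}$ and $\Phi^{\prime h}$.

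As $g$ and $h$ were arbitrary, every pair among the $\Phi^{\prime k}$ is orthogonal, whence by Definition~\ref{def:moqls} the family $\{\Phi^{\prime k}\}$ is mutually orthogonal, as required. There is essentially no obstacle in this argument: the only point deserving a word of care is that the corollary uses a single phase family $c_{ij}$ shared by all squares, whereas Lemma~\ref{lem:equivoqls} permits two independent families $c_{ij}, d_{ij}$. This mismatch is harmless, since one simply takes $d_{ij} := c_{ij}$, which is the special case of the lemma we invoke.
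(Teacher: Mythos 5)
Your proof is correct and follows exactly the route the paper intends: the corollary is stated as an immediate consequence of Lemma~\ref{lem:equivoqls}, obtained by reducing mutual orthogonality to pairwise orthogonality and applying that lemma to each pair with the common permutations $\sigma,\tau$ and the shared phase family $c_{ij}$. Your remark that one simply sets $d_{ij}:=c_{ij}$ is the right observation and completes the reduction.
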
  

\subsection{Upper bounds on the number of mutually orthogonal quantum Latin squares}
\label{sec:upperbounds}

We now show that the upper bound for the number of MOQLS of a given size is equal to the upper bound for MOLS. 
\begin{theorem}
\label{thm:upperbound}
Any family of MOQLS of dimension $n$ has size at most $n-1$.
\end{theorem}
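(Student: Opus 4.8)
The plan is to mirror the classical argument that there are at most $n-1$ mutually orthogonal Latin squares, replacing the notion of "distinct symbols" by "disjoint supports in the computational basis." The whole proof hinges on examining a single well-chosen cell of the squares after a normalization step.

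First I would normalize the family. For each square $\Phi^k$ the first row $\{\ket{\Phi^k_{0j}} \mid j\in[n]\}$ is an orthonormal basis, so the operator $U_k := \sum_{j=0}^{n-1}\ketbra{j}{\Phi^k_{0j}}$ is unitary and carries it to the computational basis. Applying $U_k$ to every entry of $\Phi^k$ (with trivial phases and trivial permutations $\sigma=\tau=\mathrm{id}$) yields a new family which, by Corollary~\ref{lem:moqlsunitr}, is still mutually orthogonal and whose squares now satisfy $\ket{\Phi^k_{0s}}=\ket s$ for all $s$. So I may assume without loss of generality that every square has computational first row.

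Next I would focus on the single cell $(i,j)=(1,0)$. Since column $0$ of each $\Phi^k$ is an orthonormal basis containing $\ket{\Phi^k_{00}}=\ket 0$, the vector $\ket{\Phi^k_{10}}$ is orthogonal to $\ket 0$ and hence supported on $\{1,\dots,n-1\}$. Now for any two squares $\Phi^a,\Phi^b$ with $a\neq b$, Lemma~\ref{lem:conditions} in the elementwise form of equation~\eqref{eq:oqlsalg2} gives $\braket{\Phi^a_{ij}}{\Phi^a_{pq}}\braket{\Phi^b_{ij}}{\Phi^b_{pq}}=\delta_{ip}\delta_{jq}$. Taking $(i,j)=(1,0)$ and $(p,q)=(0,s)$, which are distinct for every $s$, and using $\ket{\Phi^a_{0s}}=\ket{\Phi^b_{0s}}=\ket s$, this collapses to $\braket{\Phi^a_{10}}{s}\,\braket{\Phi^b_{10}}{s}=0$ for each $s\in[n]$.

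The final step is to read off the bound. Writing $S_k\subseteq\{1,\dots,n-1\}$ for the computational support of $\ket{\Phi^k_{10}}$, the identity just derived says exactly that $S_a\cap S_b=\emptyset$ whenever $a\neq b$, while each $S_k$ is nonempty because $\ket{\Phi^k_{10}}$ is a unit vector. Thus I would have a family of pairwise-disjoint nonempty subsets of a set of size $n-1$, forcing the family to have at most $n-1$ members. The step I expect to be the crux is identifying the correct quantum analogue of a classical symbol collision, namely that orthogonality forces the supports of the $(1,0)$-entries to be disjoint; once that is in hand the counting is immediate, and the only point needing care is verifying that normalization by the \emph{different} unitaries $U_k$ genuinely preserves mutual orthogonality, which is precisely what Corollary~\ref{lem:moqlsunitr} supplies.
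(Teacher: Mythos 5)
Your proof is correct, and it shares the paper's overall strategy: normalize each square with its own unitary (permitted by Corollary~\ref{lem:moqlsunitr}) so that every first row becomes the ordered computational basis, then extract the bound from the $(1,0)$ entries. Where you diverge is in the final mechanism. The paper computes the inner product $\braket{\Phi^k_{10}}{\Phi^l_{10}}$ by inserting a resolution of the identity $\sum_i\ketbra{i}{i}$, which produces factors of the form $\braket{\Phi^l_{0i}}{\Phi^l_{10}}$ with the indices in the ``wrong'' order; it therefore has to pass to the conjugate square and invoke Lemma~\ref{lem:qlsconj} before equation~\eqref{eq:oqlsalg2} applies, concluding that the $m$ vectors $\ket{\Phi^k_{10}}$ together with $\ket 0$ are $m+1$ pairwise orthogonal unit vectors in $\C^n$. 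You instead apply the elementwise form of~\eqref{eq:oqlsalg2} directly to the index pairs $(1,0)$ and $(0,s)$, where no conjugation issue arises, and obtain the stronger conclusion that the computational supports of the $\ket{\Phi^k_{10}}$ are pairwise disjoint nonempty subsets of $\{1,\dots,n-1\}$; counting disjoint sets then gives $m\le n-1$. Your route sidesteps Lemma~\ref{lem:qlsconj} and the resolution-of-identity computation and is closer in spirit to the classical MOLS argument, while the paper's route establishes only orthogonality of the $(1,0)$ entries, which is weaker than disjointness of supports but equally sufficient for the bound. Both arguments are valid.
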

\begin{proof}
Suppose that we have a set of $m$-MOQLS $\ket{\Phi^0_{ij}},...,\ket{\Phi^{m-1}_{ij}}$ of size $n\times n$. By Corollary~\ref{lem:moqlsunitr} we can apply unitaries to each QLS such that the first row of every QLS is the ordered computational basis $\ket i , i\in [n]$ so $\ket{\Phi^k_{0i}}=\ket i$ for all $k\in [m],i\in [n]$.  Consider $\braket{\Phi^k_{10}}{\Phi^l_{10}}$ for some $k,l\in [m] $ such that $k\neq l$. We have that:
\begin{align*}
\braket{\Phi^k_{10}}{\Phi^l_{10}}&=\sum_{i=0}^{n-1}\braket{\Phi^k_{10}}{i}\braket{i}{\Phi^l_{10}}=\sum_{i=0}^{n-1}\braket{\Phi^k_{10}}{\Phi^k_{0i}}\braket{\Phi^l_{0i}}{\Phi^l_{10}}=\sum_{i=0}^{n-1}\braket{\Phi^k_{10}}{\Phi^k_{0i}}\braket{\Phi^{l*}_{10}}{\Phi^{l*}_{0i}}\\&=\sum_{i,m,n,p,q=0}^{n-1}\braket{\Phi_{10}^k}{\Phi^k_{0i}}\braket{\Phi^{l*}_{10}}{\Phi_{0i}^{l*}}\braket{10}{mn}\braket{pq}{0i}=\sum_i^{n-1}\braket{10}{0i}=\braket{1}{0}=0
\end{align*}
The first equality is a resolution of the identity, the second  holds since $\ket{\Phi^k_{0i}}=\ket{\Phi^l_{0i}}=\ket i$, the third is a straightforward property of inner products, the fourth equality is simple algebraic rearrangement and the fifth equality is due to Lemma~\ref{lem:qlsconj} and equation~\ref{eq:oqlsalg2}.   So the $m$ unit vectors $\ket{\Phi^i_{10}}$ together with $\ket 0$ are $m+1$ linearly independent vectors. Thus $m$ can be at most~$n-1$. 
\end{proof}
\section{Quantum Latin isometry squares and quantum error detecting codes}
In this section we introduce quantum Latin isometry squares  a generalization of quantum Latin squares and use them to construct quantum error detecting codes. In Section~\ref{sec:ils} we give the definition of quantum isometry Latin squares and give a simple example.
In Section~\ref{sec:skew} we compose pairs of compatible quantum isometry Latin squares and thereby recover both \textit{matrices of partial isometries} and \textit{projective permutation matrices}. In Section~\ref{sec:orthoiso} we give orthogonality criteria for pairs and families of quantum isometry Latin squares. In Section~\ref{sec:error} we show how quantum error detecting codes can be constructed from orthogonal pairs of quantum isometry Latin squares. Finally in Section~\ref{sec:ueb} we show that \textit{unitary error bases} can be characterized as quantum isometry Latin squares that are orthogonal to the \textit{identity square}. 
\subsection{Quantum isometry Latin squares}\label{sec:ils} A normalized vector $\ket{\Psi}$ of dimension $n$ is a trivial example of an isometry $\ket{\Psi}:\C \rightarrow \C^n$. We can thus consider   $n$ dimensional QLS as arrays of isometries of this type. This perspective leads to the following definition which generalises QLS.
\begin{definition}[Quantum isometry Latin square]\label{def:ipm}
An $n$-by-$n$ array of isometries $k_{ij}:\C^{a_{ij}}\rightarrow \C^d$ is a \textit{quantum isometry Latin square}, denoted $(k_{ij},a_{ij},d)$ if the following hold for all $i,j,p,q\in\{0,...,n-1\}$:
\begin{align}
k_{ip}^\dag\circ k_{iq}&=\delta_{pq}\I_{a_{ip}}\label{eq:ipm1}
\\ k_{pj}^\dag\circ k_{qj}&=\delta_{pq}\I_{a_{mj}}\label{eq:ipm2}
\\ \ignore{\bigoplus_{p=0}^{n-1}k_{ip}^\dag\circ k_{ip}&=\bigoplus_{m=0}^{n-1}k_{mj}^\dag \circ k_{mj}=\I_b}\sum_{i=0}^{n-1}k_{ij}\circ k^\dag_{ij}&=\sum_{j=0}^{n-1}k_{ij}\circ k^\dag_{ij}=\I_{d}\label{eq:ipm3}
\end{align}
\end{definition}

\ignore{We now show that biunitarity follows from equations~\eqref{eq:ipm1},\eqref{eq:ipm2} and \eqref{eq:ipm3}.
\begin{lemma}\label{lem:ilsbiuni}
quantum isometry Latin squares are biunitaries. 
\end{lemma}
\begin{proof}
Equations~\eqref{eq:ipm3} ensure that diagram~\eqref{} has the correct dimensions to be biunitary. Interpreting equations~\eqref{eq:ipm1} and~\eqref{eq:ipm2} for diagram~\eqref{} gives us biunitary property equations~\eqref{} and~\eqref{}.\BMcomm{Too handwavy?}  
\end{proof}}
\begin{remark}
A quantum Latin square $\Phi$ of size $n\times n$ is a quantum isometry Latin squares such that $a_{ij}=1$ for all $i$ and $j$, and is therefore of the form $(\ket{\Phi_{ij}},1,n)$. \end{remark}
As a first example, we show how to construct quantum Latin isometry squares from arbitrary families of unitaries.
\begin{example}\label{ex:uniils}
For a Hilbert space $\C^n$ equipped with a family of $m$ unitaries $\mathcal U :=\{U_i: \C^n \to \C^n|i \in [m]\}$, we can build a quantum Latin isometry square, denoted $L(\mathcal U)$, of size $m$: \begin{equation*} L(\mathcal U):=( U_i \delta _{ij}, n \delta _{ij}, n )\end{equation*}Such a quantum Latin isometry square $L(\mathcal U)$ is diagonal, with nonzero isometries only on the leading diagonal. It is straightforward to see that equations~\eqref{eq:ipm1},\eqref{eq:ipm2} and \eqref{eq:ipm3} are satisfied.
\end{example}
\ignore{quantum isometry Latin squares have some useful properties including the following which generalises Theorem~\ref{}. 
\begin{theorem}\label{lem:biuils}
Given an $n\times n$ quantum isometry Latin square $(k_{ij},a_{ij},d)$ the following linear maps are unitary:
\begin{align}
K^R:=\sum_{i,j=0}^{n-1}k_{ij}\otimes\ketbra{i}{ij}& \qquad \qquad
K^C:=\sum_{i,j=0}^{n-1}k_{ij}\otimes\ketbra{j}{ij}
\end{align}
\end{theorem}
\begin{proof}
We have that:
\begin{align*}
(K^R)^\dag\circ K^R&=\sum_{i,j,p,q=0}^{n-1}k_{pq}^\dag \circ k_{ij} \otimes \ket{pq}\braket{p}{i}\bra{ij}\\
&=\sum_{i,j,q=0}^{n-1}k_{iq}^\dag \circ k_{ij} \otimes \ketbra{iq}{ij}\\
&\super{\eqref{eq:ipm1}}=\sum_{i,j}^{n-1}\ketbra{ij}{ij}=\I_{n^2}
\end{align*}
\end{proof}}
\subsection{Skew projective permutation matrices}\label{sec:skew}
Pairs of quantum isometry Latin squares that share the same multiset of values $a_{ij}$ can be composed  to form a new structure.
\begin{definition}[Skew projective permutation matrix]\label{def:sppm}
Given a pair of $n$-by-$n$ quantum isometry Latin squares $(k_{ij},a_{ij},b)$ and $(q_{ij},a_{ij},b), $ let  $T_{ij}:=q_{ij}\circ k_{ij}^\dag$. We define the $n$-by-$n$ array of linear operators $T_{ij}:\C^b\rightarrow \C^b$ to be a \textit{skew projective permutation matrix} (skew PPM). 
\end{definition}
We now show that skew projective permutation matrices are precisely the \textit{matrices of partial isometries of type (C1,C2,C3,C4)}   introduced in a recent paper by Benoist and Nechita~\cite{Benoist2017}. These structures were shown to  characterise quantum channels  preserving pointer states. We first require the following definition.
\begin{definition}[Partial isometry~\cite{Halmos1963}] A \textit{partial isometry} is a linear map such that the restriction to the orthogonal complement of its kernel is an isometry. Alternatively, a partial isometry $A$ is a linear map such that $A\circ A^\dag \circ A=A$. The \textit{initial space} of a partial isometry is the orthogonal complement of its kernel. The \textit{final space} is its range.\end{definition}

Two examples of partial isometries are orthogonal projectors and unitaries. We now give Benoist and Nechita's definition.
\begin{definition}[\cite{Benoist2017}, Definition 3.2 conditions (C1) to (C4)]
An $n$-by-$n$ \textit{matrix of partial isometries of type (C1,C2,C3,C4)} and dimension $b$ is an $n$-by-$n$ array of partial isometries $T_{ij}:\C^b\to\C^b$ such that along each row and column the initial and final spaces of the $T_{ij}$ partition $\C^b$.
\end{definition}
Skew PPMs are matrices of partial isometries of type (C1,C2,C3,C4).

\begin{lemma}\label{lem:skew=partiso}
Given a pair $n$-by-$n$ quantum isometry Latin squares $(k_{ij},a_{ij},b)$ and $(q_{ij},a_{ij},b)$, the corresponding skew PPM is a $b$ dimensional $n$-by-$n$ matrix of partial isometries of type (C1,C2,C3,C4). 
\end{lemma}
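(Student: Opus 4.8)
The plan is to check directly that each operator $T_{ij} = q_{ij} \circ k_{ij}^\dag$ is a partial isometry, to identify its initial and final spaces explicitly in terms of $k_{ij}$ and $q_{ij}$, and then to extract the four partition conditions (C1)--(C4) from equation~\eqref{eq:ipm3}. Throughout I use that both $k_{ij}$ and $q_{ij}$ are isometries, so that $k_{ij}^\dag \circ k_{ij} = \I_{a_{ij}}$ and $q_{ij}^\dag \circ q_{ij} = \I_{a_{ij}}$ by equation~\eqref{eq:ipm1}.

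First I would verify the partial isometry condition $T_{ij} \circ T_{ij}^\dag \circ T_{ij} = T_{ij}$. Since $T_{ij}^\dag = k_{ij} \circ q_{ij}^\dag$, expanding the triple composite and collapsing the two inner factors $k_{ij}^\dag \circ k_{ij}$ and $q_{ij}^\dag \circ q_{ij}$ to identities leaves $q_{ij} \circ k_{ij}^\dag = T_{ij}$, as required.

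Next I would pin down the relevant subspaces. Because $q_{ij}$ is injective, the kernel of $T_{ij}$ coincides with $\ker(k_{ij}^\dag)$, which is the orthogonal complement of the range of $k_{ij}$; hence the initial space of $T_{ij}$ is the range of $k_{ij}$, with orthogonal projector $k_{ij} \circ k_{ij}^\dag$. Dually, because $k_{ij}^\dag$ is surjective, the final space (range) of $T_{ij}$ is the range of $q_{ij}$, with projector $q_{ij} \circ q_{ij}^\dag$.

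It then remains to show that these spaces partition $\C^b$ along each row and column. Equation~\eqref{eq:ipm3} applied to $(k_{ij},a_{ij},b)$ states that $\sum_{j} k_{ij} \circ k_{ij}^\dag = \I_b$ for each fixed row $i$ and $\sum_{i} k_{ij} \circ k_{ij}^\dag = \I_b$ for each fixed column $j$; applied to $(q_{ij},a_{ij},b)$ it gives the corresponding statements for the projectors $q_{ij} \circ q_{ij}^\dag$. The crux is the standard fact that a family of orthogonal projectors summing to the identity is automatically pairwise orthogonal and spans the space, so that their ranges form an orthogonal decomposition: multiplying $\sum_k P_k = \I$ by a fixed $P_l$ on both sides gives $\sum_{k \neq l} P_l P_k P_l = 0$, and since each summand $P_l P_k P_l = (P_k P_l)^\dag (P_k P_l)$ is positive semidefinite it must vanish, forcing $P_k P_l = 0$ for $k \neq l$. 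Applying this to the four sums above yields exactly the four partition conditions: the initial spaces of the $T_{ij}$ partition $\C^b$ along each row (C1) and each column (C2), and likewise the final spaces (C3, C4). This elementary orthogonality argument is the only non-formal step; everything else is a direct computation.
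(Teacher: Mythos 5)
Your proof is correct and follows essentially the same route as the paper: verify $T_{ij}\circ T_{ij}^\dag\circ T_{ij}=T_{ij}$ using the isometry relations, identify the initial and final spaces with the ranges of $k_{ij}$ and $q_{ij}$ via the projectors $k_{ij}\circ k_{ij}^\dag$ and $q_{ij}\circ q_{ij}^\dag$, and deduce the row and column partitions from the defining equations. The only (harmless) divergence is in the last step, where the paper gets pairwise orthogonality of the ranges directly from equations~\eqref{eq:ipm1} and~\eqref{eq:ipm2}, whereas you derive it from the completeness relation~\eqref{eq:ipm3} alone via the positive-semidefiniteness argument for projectors summing to the identity; both are valid.
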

\begin{proof}Given the pair of isometries $k_{ij},q_{ij}:\C^{a_{ij}} \rightarrow \C^b$ for some $i,j\in[n]$, we  form the following composite linear maps: 
\begin{align*}
K_{ij}:=k_{ij} \circ k_{ij}^\dag
\quad \quad \quad Q_{ij}:=q_{ij} \circ q_{ij}^\dag
\quad \quad \quad T_{ij}:=q_{ij}\circ k_{ij}^\dag
\end{align*}
It is easy to see that $K_{ij}$ and $Q_{ij}$ are orthogonal projectors. The linear map $T_{ij}$ is a partial isometry since $T_{ij}\circ T_{ij}^\dag \circ T_{ij}=q_{ij}\circ k_{ij}^\dag
\circ k_{ij}\circ q_{ij}^\dag\circ q_{ij}\circ k_{ij}^\dag =q_{ij}\circ k_{ij}^\dag=T_{ij} $ with each equality holding either by definition or using the properties of an isometry. It can  easily be checked that the initial and final spaces of each $T_{ij}$ are the spaces projected onto by $K_{ij}$ and $Q_{ij}$ respectively. By equations~\eqref{eq:ipm1},\eqref{eq:ipm2} and~\eqref{eq:ipm3}  the initial and final spaces of the $T_{ij}$ partition $\C^b$ along the rows and columns as required. 
\end{proof} 
Skew PPMs also appear in another,  currently very active area of research. Skew PPMs are a generalisation of \textit{projective permutation matrices} (PPMs). PPMs are square arrays of orthogonal projectors that form a projective POVM on every row and column. PPMs are  skew PPMs coming from  pairs of identical quantum isometry Latin squares. In this case  the partial isometries $T_{ij}$ are all  orthogonal projectors, having the same initial and final spaces. Clearly these projectors form POVMs on every row and column since the spaces they project onto partition the whole Hilbert space along every row and column by Lemma~\ref{lem:skew=partiso}.  

PPMs, also known as magic unitaries and quantum bijections between classical sets have recently appeared in the context of  quantum non-local games~\cite{Abramsky2017,Atserias2016,Musto2017,Musto2018} and the study of compact quantum groups~\cite{Banica2005,Bichon2003,Wang1998}.
\subsection{Orthogonal quantum isometry Latin squares}\label{sec:orthoiso}
We now  extend the definition of  orthogonal quantum Latin squares to quantum isometry Latin squares. 
\begin{definition}[Orthogonal quantum isometry Latin squares]\label{def:oipm}
A pair of quantum isometry Latin squares $(k_{ij},a_{ij},d)$ and $(q_{ij},a_{ij},d)$ are \textit{orthogonal} if the operators $T_{ij}=q_{ij}\circ k_{ij}^\dag$  span the space of operators and for all non-zero $T_{ij}$ we have that Tr$(T^\dag_{ij}\circ T_{ij})=a$ for some $a\in \C$.
We say that the $T_{ij}$ form an orthogonal skew PPM.  \end{definition}

We now give a more algebraic characterisation of orthogonal quantum isometry Latin squares.
\begin{lemma}\label{lem:oipm}
Given a pair of $n \times n$ quantum isometry Latin squares, $K=(k_{ij},a_{ij},d)$ and $Q=(q_{ij},a_{ij},d)$ define the following linear map $S:\C^n\otimes\C^n\rightarrow \C^d\otimes \C^d$:
\begin{equation}\label{eq:oipm}
S:=\sum_{i,j,p,q=0}^{n-1}\sum_{x=0}^{d-1}\ket{ij}\bra x q_{ij}\circ k_{ij}^\dag\otimes\bra{x}
\end{equation}$K$ and $Q$ are \textit{orthogonal} just when $S$ is  an isometry.
\end{lemma}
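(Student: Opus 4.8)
The plan is to read $S$ through the lens of \emph{vectorisation}, recognising it as the operator that records the Hilbert--Schmidt inner products of the partial isometries $T_{ij} := q_{ij}\circ k_{ij}^\dag$ against the computational basis $\{\ket{ij}\}$. Once $S$ is written in this form, the isometry condition $S^\dag\circ S=\I$ becomes a statement that the vectorised operators resolve the identity, and the whole proof reduces to matching that statement with Definition~\ref{def:oipm}. This mirrors exactly the logic of Lemma~\ref{lem:conditions}, with the role of the product vectors $\ket{\Phi_{ij}}\otimes\ket{\Psi_{ij}}$ now played by the operators $T_{ij}$ under the trace inner product.

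First I would tidy the definition \eqref{eq:oipm}. Setting
\[
w_{ij} := \sum_{x=0}^{d-1}\bigl(T_{ij}^\dag\ket{x}\bigr)\otimes\ket{x}\in\C^d\otimes\C^d,
\]
the inner factor $\sum_x(\bra{x}\,T_{ij})\otimes\bra{x}$ appearing in \eqref{eq:oipm} is precisely $\bra{w_{ij}}$, so that $S=\sum_{i,j}\ket{ij}\bra{w_{ij}}$. The two bookkeeping facts I then record are the standard vectorisation identity $\braket{w_{ij}}{w_{pq}}=\Tr(T_{ij}\circ T_{pq}^\dag)$ and, in particular, $\|w_{ij}\|^2=\Tr(T_{ij}\circ T_{ij}^\dag)=a_{ij}$, the latter using $k_{ij}^\dag\circ k_{ij}=\I_{a_{ij}}$ together with cyclicity of the trace. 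Using $\braket{ij}{pq}=\delta_{ip}\delta_{jq}$ this gives
\[
S^\dag\circ S=\sum_{i,j}\ketbra{w_{ij}}{w_{ij}},
\]
so $S$ is an isometry exactly when the vectors $\{w_{ij}\}$, equivalently the operators $\{T_{ij}\}$ under the Hilbert--Schmidt inner product, resolve the identity on $\C^d\otimes\C^d\cong\mathrm{End}(\C^d)$.

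Next I would match this to Definition~\ref{def:oipm}. The clause $\Tr(T_{ij}^\dag\circ T_{ij})=a$ is exactly the equal-norm statement $\|w_{ij}\|^2=a$, and ``span the space of operators'' is the surjectivity half. For orthogonality I would first extract the row- and column-wise cases for free: since
\[
\Tr(T_{ij}\circ T_{iq}^\dag)=\Tr\bigl(q_{ij}\circ k_{ij}^\dag\circ k_{iq}\circ q_{iq}^\dag\bigr)=\delta_{jq}\,a_{ij}
\]
by \eqref{eq:ipm1}, and the analogous computation along a column collapses by \eqref{eq:ipm2}, the $w_{ij}$ sharing a row or a column are automatically orthogonal. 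I would then invoke the clean fact that a Parseval frame all of whose vectors have unit norm is automatically an orthonormal basis (from $\|v\|^2=\sum_l|\braket{v_l}{v}|^2$), applied after rescaling the $w_{ij}$ by $1/\sqrt a$; combined with spanning this pins the number of nonzero $T_{ij}$ to $d^2$ and upgrades the configuration to a genuine orthogonal operator basis, which is precisely the resolution-of-identity condition above.

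The main obstacle is exactly this last reconciliation. Taken literally, ``span together with a common trace-norm'' is weaker than the tight-frame/orthonormal-basis condition an isometry demands: one must rule out overcomplete configurations and verify that the off-diagonal pairs (those sharing neither a row nor a column, which \eqref{eq:ipm1}--\eqref{eq:ipm2} do \emph{not} control) are orthogonal. The relations \eqref{eq:ipm3} carry the weight here, forcing $\sum_j a_{ij}=d$ and hence the count $n=ad$ that makes spanning and equal-norm collapse onto an orthonormal basis. I also expect the scalar $\sqrt a$ to need careful tracking, since $S^\dag\circ S$ equals $a\,\I$ rather than $\I$ for the unnormalised map, so that ``$S$ is an isometry'' should be read with the normalisation built into the common value $a$ of Definition~\ref{def:oipm}.
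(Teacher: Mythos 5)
Your proposal follows the same route as the paper's own proof: both amount to writing $S=\sum_{i,j}\ket{ij}\bra{w_{ij}}$ with $w_{ij}=\sum_x T_{ij}^\dag\ket x\otimes\ket x$ the vectorisation of $T_{ij}^\dag$, computing $S^\dag\circ S=\sum_{i,j}\ketbra{w_{ij}}{w_{ij}}$, and identifying the isometry condition with the statement that the $w_{ij}$ resolve the identity on $\C^d\otimes\C^d$. The paper's displayed chain ending at \eqref{eq:tired} is exactly this computation read backwards; it simply asserts the opening equality $\I_{d^2}=\sum_{i,j}\ketbra{w_{ij}}{w_{ij}}$ as a consequence of the spanning hypothesis and declares the converse straightforward. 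So you have reconstructed the intended argument, and the two difficulties you isolate are genuine features of the paper's proof rather than artefacts of your approach: (i) spanning together with a common trace norm is strictly weaker than the tight-frame condition, and the axioms \eqref{eq:ipm1}--\eqref{eq:ipm2} only control inner products of entries sharing a row or a column; (ii) the normalisation really is off, since $\|w_{ij}\|^2=a_{ij}=a$, so an orthogonal spanning family gives $S^\dag\circ S=a\,\I_{d^2}$ (in Example~\ref{ex:orthoiqls} one gets $2\,\I_{16}$), and $S$ is an isometry only after rescaling by $1/\sqrt a$ or under the reading $a=1$.

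Where your write-up does not succeed is in closing gap (i): the counting argument ($\sum_j a_{ij}=d$, hence $n=ad$ and exactly $d^2$ nonzero entries) plus equal norms plus spanning still does not force the off-row, off-column pairs to be trace-orthogonal --- $d^2$ equal-norm vectors spanning a $d^2$-dimensional space need not be pairwise orthogonal --- and the Parseval-frame fact you quote presupposes the very resolution of identity you are trying to establish in the forward direction. The honest repair is to read Definition~\ref{def:oipm} as additionally requiring $\Tr(T_{ij}^\dag\circ T_{pq})=\delta_{ip}\delta_{jq}\,a$, which is what the Remark immediately following the lemma silently assumes; with that reading both directions reduce to the one-line frame computation that you and the paper both perform. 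Since the paper's proof makes the identical leap without comment, your proposal should be regarded as matching it in substance, with the added value of having located precisely where the published argument is incomplete.
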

\begin{proof}
Assuming $K$ and $Q$ are orthogonal, the partial isometries $q_{ij}\circ k_{ij}^\dag$ span the operator space. We therefore have:
\begin{align}
\I_{d^2}&=\,\,\,\,\,\sum_{i,j=0}^{n-1}\,\,\,\,\,\sum_{x,y=0}^{d-1}\ket y \otimes k_{ij}\circ q_{ij}^\dag\ket{y} \bra x q_{ij}\circ k_{ij}^\dag\otimes\bra{x}\\ \label{eq:tired}&=\sum_{i,j,p,q=0}^{n-1}\sum_{x,y=0}^{d-1}\ket y \otimes k_{pq}\circ q_{pq}^\dag\ket{y}\braket{pq}{ij} \bra x q_{ij}\circ k_{ij}^\dag\otimes\bra{x}\\&=\quad S^\dag\circ S
\end{align}
 The other direction follows straightforwardly.
 \ignore{is equivalent to the following linear map is an isometry $\C^{n}\otimes \C^n\rightarrow \C^{b}\otimes \C^b$:
\begin{equation}
\sum_{i,j=0}^{n}\sum_{x=0}^{b}\ket{ij} \otimes \bra x T_{ij} \otimes \bra{x}
\end{equation}}
\end{proof}
\begin{remark}Note that
$\text{Tr}(T_{ij}^\dag\circ T_{pq})=\delta_{ip}\delta_{jq}\text{Tr}(T_{ij}^\dag\circ T_{ij})$.
\end{remark}

\begin{remark}
A PPM can never be orthogonal since the projectors $T_{ij}$ of a PPM span the operator space for every row and column.
\end{remark}
Orthogonal  quantum isometry Latin squares generalise orthogonal QLS (and therefore orthogonal Latin squares).
\begin{lemma}
Pairs of QLS are orthogonal  quantum isometry Latin squares if and only if they are orthogonal quantum Latin squares.
\end{lemma}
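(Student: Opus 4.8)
The plan is to view each QLS $\Phi$ as the quantum isometry Latin square $(\ket{\Phi_{ij}},1,n)$, exactly as in the Remark following Definition~\ref{def:ipm}, so that the associated skew PPM has rank-one entries $T_{ij}=q_{ij}\circ k_{ij}^\dagger=\ketbra{\Psi_{ij}}{\Phi_{ij}}$ with $d=n$. First I would dispose of the trace side-condition in Definition~\ref{def:oipm}: since $\ket{\Phi_{ij}}$ and $\ket{\Psi_{ij}}$ are unit vectors, every $T_{ij}$ is nonzero and $\Tr(T_{ij}^\dagger\circ T_{ij})=\braket{\Phi_{ij}}{\Phi_{ij}}\braket{\Psi_{ij}}{\Psi_{ij}}=1$, so the equal-trace requirement holds automatically with $a=1$. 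Orthogonality of the two isometry squares is therefore controlled entirely by the spanning clause, which by Lemma~\ref{lem:oipm} is equivalent to the map $S$ of equation~\eqref{eq:oipm} being an isometry.

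Next I would compute in this rank-one case. Writing $S=\sum_{i,j}\ket{ij}\bra{v_{ij}}$ with $\bra{v_{ij}}=\sum_{x}\bra{x}T_{ij}\otimes\bra{x}$ the vectorization of $T_{ij}$, one gets $SS^\dagger=\sum_{i,j,p,q}\braket{v_{ij}}{v_{pq}}\ketbra{ij}{pq}=\sum_{i,j,p,q}\Tr(T_{pq}^\dagger\circ T_{ij})\ketbra{ij}{pq}$. Because $d=n$ the operator $S$ is square, so it is an isometry precisely when $SS^\dagger=\I_{n^2}$, that is when $\Tr(T_{pq}^\dagger\circ T_{ij})=\delta_{ip}\delta_{jq}$ for all indices. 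A direct calculation then gives $\Tr(T_{pq}^\dagger\circ T_{ij})=\braket{\Phi_{ij}}{\Phi_{pq}}\braket{\Psi_{pq}}{\Psi_{ij}}$, so the isometry condition reads $\braket{\Phi_{ij}}{\Phi_{pq}}\braket{\Psi_{pq}}{\Psi_{ij}}=\delta_{ip}\delta_{jq}$.

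Finally I would match this against the algebraic characterization of orthogonal QLS. The only discrepancy with equation~\eqref{eq:oqlsalg2} is a complex conjugation in one factor, and since the right-hand side $\delta_{ip}\delta_{jq}$ is real, the displayed condition is equivalent to its own conjugate $\braket{\Phi_{pq}}{\Phi_{ij}}\braket{\Psi_{ij}}{\Psi_{pq}}=\delta_{ip}\delta_{jq}$. Using $\braket{\Phi_{pq}}{\Phi_{ij}}=\braket{\Phi^*_{ij}}{\Phi^*_{pq}}$, this is exactly the entrywise form of equation~\eqref{eq:oqlsalg2} for the pair $\Phi^*,\Psi$. By Lemma~\ref{lem:conditions} it holds iff $\Phi^*,\Psi$ are orthogonal, and by Lemma~\ref{lem:qlsconj} this is in turn equivalent to $\Phi,\Psi$ being orthogonal quantum Latin squares, giving both directions of the biconditional at once.

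I expect the main obstacle to be bookkeeping rather than conceptual: correctly tracking which inner products are conjugated when passing from the Hilbert--Schmidt Gram matrix of the $T_{ij}$ to the orthogonality equation~\eqref{eq:oqlsalg2}, and verifying cleanly that the reduction of Definition~\ref{def:oipm} to the single isometry condition of Lemma~\ref{lem:oipm} is legitimate once the equal-trace clause is seen to be automatic. The conjugation invariance supplied by Lemma~\ref{lem:qlsconj} is precisely what makes the two sides align, so I would invoke it explicitly rather than attempting a term-by-term identity between $\braket{\Psi_{pq}}{\Psi_{ij}}$ and $\braket{\Psi_{ij}}{\Psi_{pq}}$.
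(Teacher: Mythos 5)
Your proposal is correct and follows essentially the same route as the paper: both reduce the statement to the unitarity of the square operator $S$ from Lemma~\ref{lem:oipm}, identify the resulting condition as a conjugated form of the orthogonality equations of Lemma~\ref{lem:conditions}, and close the gap with Lemma~\ref{lem:qlsconj}. The only (harmless) differences are that you compute the Gram-matrix composite $SS^\dag$ and match equation~\eqref{eq:oqlsalg2} where the paper computes the other composite and matches equation~\eqref{eq:oqlsalg}, and that you explicitly note the equal-trace clause of Definition~\ref{def:oipm} is automatic for rank-one entries, which the paper leaves implicit.
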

\begin{proof}Consider a pair of $n$-by-$n$  QLS $(\ket{k_{ij}},1,n)$ and $(\ket{q_{ij}},1,n)$ such that they are orthogonal quantum isometry Latin squares by Definition~\ref{def:oipm}.
By Lemma~\ref{lem:oipm}, $S$ is an isometry. Since $S$ is a linear operator on a finite dimensional Hilbert space, $S$ is unitary.  This yields the following equation: 
\begin{equation*}\sum_{i,j,x,y=0}^{n-1}\ket{k_{ij}} \braket{q_{ij}}x \braket{y}{q_{ij}}\bra{k_{ij}}\otimes \ketbra{x}{y}=\sum^{n-1}_{i,j=0}\ketbra{k_{ij}}{k_{ij}}\otimes\ketbra{q_{ij}^*}{q_{ij}^*}=\I_{n^2}
\end{equation*} By Lemmas~\ref{lem:conditions} and~\ref{lem:qlsconj} this holds if and only if $\ket{q_{ij}}$ and $\ket{k_{ij}}$ are orthogonal QLS.   
\end{proof}
We define \textit{mutually orthogonal quantum isometry Latin squares,} to be sets of pairwise orthogonal quantum isometry Latin squares thus generalising MOLS and MOQLS (see Definition~\ref{def:moqls}).

We now present an example pair of orhogonal quantum isometry Latin squares which are not QLS.
\begin{example}\label{ex:orthoiqls}
We present a pair of orthogonal quantum Latin isometry squares $Q$ and $K$ and associated orthogonal skew PPM, $T$. We have $n=8$,  $d=4$ and $a_{ij}=2$ or $0$ for all $i,j\in \{0,...,7\}$. There are $d^2=16$ non-zero $T_{ij}$ as required to span the operator space.

We fix the computational basis $\ket a, \ket b$ for $\C^2$ and $\ket 0, \ket 1, \ket 2\ \ket 3$ for $\C^4$.

\noindent We present the first quantum Latin isometry square $Q$:
\begin{align*}
 \footnotesize\arraycolsep=2pt \grid{
 \ketbra{0}{a}+\ketbra{1}{b} &
\ketbra 2 a + \ketbra 3 b &
 0 & 0 & 0 & 0 & 0 & 0 \\
 \hline
\ketbra 2 a + \ketbra 3 b & 
\ketbra{1}{a}+\ketbra{0}{b}&
 0 & 0 & 0 & 0 & 0 & 0 \\
 \hline
 0 & 0 &
 \ketbra{0}{a}-\ketbra{1}{b}&
 \ketbra{3}{b}-\ketbra{2}{a}&
 0&0&0&0 \\
 \hline
 0&0&
 \ketbra{2}{a}-\ketbra{3}{b}&
 \ketbra{0}{a}-\ketbra{1}{b}&
 0&0&0&0\\
 \hline
 0&0&0&0&
 \ketbra{1}{a}+\ketbra{2}{b}&
 \ketbra{0}{a}+\ketbra{3}{b}&
 0&0\\
 \hline
 0&0&0&0&
 \ketbra{3}{b}-\ketbra{0}{a}&
 \ketbra{2}{b}-\ketbra{1}{a}&
 0&0\\
 \hline
 0&0&0&0&0&0&
 \ketbra{0}{a}+\ketbra{2}{b}&
 \ketbra{3}{b}-\ketbra{1}{a}\\
 \hline
 0&0&0&0&0&0&
 \ketbra{1}{a}+\ketbra{3}{b}&
 \ketbra{2}{b}-\ketbra{0}{a}
}\end{align*} 
Now we present the quantum Latin isometry square $K$:
\begin{align*}
\footnotesize\arraycolsep=2pt\grid{
\ketbra{0}{a}+\ketbra{1}{b}&
\ketbra{2}{a}+\ketbra{3}{b}&
0&0&0&0&0&0\\
\hline
\ketbra{2}{a}+\ketbra{3}{b}&
\ketbra{0}{a}+\ketbra{1}{b}&
0&0&0&0&0&0\\
\hline
0&0&
\ketbra{0}{a}+\ketbra{1}{b}&
\ketbra{3}{a}+\ketbra{2}{b}&
0&0&0&0\\
\hline
0&0&
\ketbra{2}{a}+\ketbra{3}{b}&
\ketbra{1}{a}+\ketbra{0}{b}&
0&0&0&0\\
\hline
0&0&0&0&
\ketbra{2}{a}+\ketbra{1}{b}&
\ketbra{3}{a}+\ketbra{0}{b}&
0&0\\
\hline
0&0&0&0&
\ketbra{3}{a}+\ketbra{0}{b}&
\ketbra{2}{a}+\ketbra{1}{b}&
0&0\\
\hline
0&0&0&0&0&0&
\ketbra{2}{a}+\ketbra{0}{b}&
\ketbra{3}{a}+\ketbra{1}{b}\\
\hline
0&0&0&0&0&0&
\ketbra{3}{a}+\ketbra{1}{b}&
\ketbra{2}{a}+\ketbra{0}{b}
}\end{align*}    
Finally we present the associated skew projective permutation matrix $T$ with entries $T_{ij}$:
\begin{align*}
\footnotesize\arraycolsep=2pt\grid{
\ketbra{0}{0}+\ketbra{1}{1}&
\ketbra{3}{2}+\ketbra{2}{3}&
0&0&0&0&0&0\\
\hline
\ketbra{2}{2}+\ketbra{3}{3}&
\ketbra{1}{0}+\ketbra{0}{1}&
0&0&0&0&0&0\\
\hline
0&0&
\ketbra{0}{0}-\ketbra{1}{1}&
\ketbra{3}{2}-\ketbra{2}{3}&
0&0&0&0\\
\hline
0&0&
\ketbra{2}{2}-\ketbra{3}{3}&
\ketbra{0}{1}-\ketbra{1}{0}&
0&0&0&0\\
\hline
0&0&0&0&
\ketbra{2}{1}+\ketbra{1}{2}&
\ketbra{3}{0}+\ketbra{0}{3}&
0&0\\
\hline
0&0&0&0&
\ketbra{3}{0}-\ketbra{0}{3}&
\ketbra{2}{1}-\ketbra{1}{2}&
0&0\\
\hline
0&0&0&0&0&0&
\ketbra{2}{0}+\ketbra{0}{2}&
\ketbra{3}{1}-\ketbra{1}{3}\\
\hline
0&0&0&0&0&0&
\ketbra{3}{1}+\ketbra{1}{3}&
\ketbra{2}{0}-\ketbra{0}{2}
}\end{align*} 

\end{example}
\subsection{Quantum error detecting codes}\label{sec:error}
We now prove the main result of this section, a construction of quantum error detecting codes from  orthogonal skew PPMs.
In their famous 1997 paper Knill and Laflamme proved that certain one-to-three $4$-valent tensors can be used as encoding maps for quantum codes that can detect any single local error. 
\begin{theorem}~\cite{Knill1997}
Given a three-to-one tensor $\bra{E_{ijk}}:\C^a\rightarrow \C^b\otimes \C^c\otimes \C^d$,  it is  an encoding map that detects a single error
if the following hold:
\begin{align}
\sum_{i=0}^{b-1} \sum_{j=0}^{c-1} \sum_{l,k=0}^{d-1}\ketbra{E_{ijl}}{E_{ijk}}\otimes\ketbra{l}{k}&=\I_a \otimes \I_d\label{eq:error1}
\\ \sum_{i=0}^{b-1} \sum_{l,j=0}^{c-1} \sum_{k=0}^{d-1}\ketbra{E_{ilk}}{E_{ijk}}\otimes\ketbra{l}{j}&=\I_a \otimes \I_c\label{eq:error2}\\
\sum_{l,i=0}^{b-1} \sum_{j=0}^{c-1} \sum_{k=0}^{d-1}\ketbra{E_{ljk}}{E_{ijk}}\otimes\ketbra{l}{i}&=\I_a \otimes \I_b\label{eq:error3}
\end{align}
\end{theorem}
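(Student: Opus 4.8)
The plan is to read the three displayed equations as the Knill--Laflamme error-detection conditions, one for each of the three physical tensor factors $\C^b$, $\C^c$, $\C^d$, and then obtain detection of an arbitrary single local error by linearity. First I would make the encoding map explicit: write $V=\sum_{i,j,k}\ketbra{ijk}{E_{ijk}}:\C^a\to\C^b\otimes\C^c\otimes\C^d$, so that $V^\dag=\sum_{i,j,k}\ketbra{E_{ijk}}{ijk}$. The operational principle I would adopt is the standard one: an error operator $F$ on the physical system is detected exactly when $V^\dag\circ F\circ V=c_F\,\I_a$ for some scalar $c_F$, which is the Knill--Laflamme criterion.

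The central computation is to evaluate $V^\dag\circ F\circ V$ when $F$ is a single local error, that is, acts nontrivially on one factor only. For the third factor,
\begin{equation*}
V^\dag\circ\big(\I_b\otimes\I_c\otimes\ketbra{l}{k}\big)\circ V=\sum_{i=0}^{b-1}\sum_{j=0}^{c-1}\ketbra{E_{ijl}}{E_{ijk}},
\end{equation*}
which is an elementary contraction against the computational resolutions of the identity on $\C^b$ and $\C^c$. Viewing equation~\eqref{eq:error1} as a block matrix on $\C^a\otimes\C^d$ indexed by $(l,k)$, this identifies it precisely with the family $\sum_{i,j}\ketbra{E_{ijl}}{E_{ijk}}=\delta_{lk}\,\I_a$. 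Running the same contraction on the second and first factors matches equations~\eqref{eq:error2} and~\eqref{eq:error3} with $\sum_{i,k}\ketbra{E_{ilk}}{E_{ijk}}=\delta_{lj}\,\I_a$ and $\sum_{j,k}\ketbra{E_{ljk}}{E_{ijk}}=\delta_{li}\,\I_a$ respectively.

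Detection then follows immediately. An arbitrary error on, say, the third factor is $\I_b\otimes\I_c\otimes G$ with $G=\sum_{l,k}g_{lk}\ketbra{l}{k}$, and
\begin{equation*}
V^\dag\circ(\I_b\otimes\I_c\otimes G)\circ V=\sum_{l,k}g_{lk}\sum_{i,j}\ketbra{E_{ijl}}{E_{ijk}}=\Big(\sum_{l}g_{ll}\Big)\I_a=\Tr(G)\,\I_a,
\end{equation*}
so the Knill--Laflamme criterion holds with $c_F=\Tr(G)$; the other two factors are identical. Setting $G=\I_d$ in the same line gives $V^\dag\circ V=d\,\I_a$, so $\tfrac{1}{\sqrt d}V$ is an isometry and the code space is well defined.

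The step I expect to demand the most care is the index bookkeeping that converts each single bundled equation on $\C^a\otimes\C^d$ (or $\C^a\otimes\C^c$, $\C^a\otimes\C^b$) into its collection of scalar-indexed Knill--Laflamme identities, together with the consistency observation that the three conditions force the same normalization $V^\dag\circ V=b\,\I_a=c\,\I_a=d\,\I_a$, hence $b=c=d$, as indeed holds in the intended application to orthogonal skew PPMs. Conceptually nothing beyond Knill--Laflamme is needed, so once the three displayed equations are matched to its hypotheses the result could equally be obtained by direct appeal to~\cite{Knill1997}.
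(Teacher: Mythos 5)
The paper does not prove this statement at all: it is imported as a known result, with the citation to \cite{Knill1997} standing in for a proof. Your proposal therefore supplies an argument where the paper supplies none, and the argument is correct. The contraction $V^\dag\circ(\I_b\otimes\I_c\otimes\ketbra{l}{k})\circ V=\sum_{i,j}\ketbra{E_{ijl}}{E_{ijk}}$ is right, reading equation~\eqref{eq:error1} blockwise as $\sum_{i,j}\ketbra{E_{ijl}}{E_{ijk}}=\delta_{lk}\,\I_a$ is the correct translation, and linearity then gives $V^\dag\circ(\I_b\otimes\I_c\otimes G)\circ V=\Tr(G)\,\I_a$, which is the Knill--Laflamme detection criterion for errors supported on the third factor; the other two factors are symmetric. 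The only content you still import from \cite{Knill1997} is the criterion itself (that $V^\dag\circ F\circ V\propto\I_a$ characterizes detectability), which is a reasonable division of labour given that the theorem is attributed to that reference anyway. One caution about your closing parenthetical: your deduction that the three conditions jointly force $V^\dag\circ V=b\,\I_a=c\,\I_a=d\,\I_a$, hence $b=c=d$, is correct, but it does \emph{not} obviously hold in the paper's intended application --- in Theorem~\ref{thm:ilstensor} the three physical factors have dimensions $n$, $d$ and $n$, and in Example~\ref{ex:orthoiqls} one has $n=8\neq 4=d$ --- so your (correct) observation in fact exposes a dimensional tension in the downstream application rather than confirming it. That issue lies with Theorem~\ref{thm:ilstensor} and the surrounding normalizations, not with your proof of the present statement, which stands.
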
 
\noindent  
\begin{theorem}
\label{thm:ilstensor}
Given an  $n$-by-$n$ pair of orthogonal quantum isometry Latin squares
$(k_{ij},a_{ij},d)$ and $(q_{ij},a_{ij},d)$; the following one-to-three tensor is an encoding map that detects a single error:
\begin{equation}
\bra{T}:=\sum_{i,j=0}^{n-1}\ket i \otimes q_{ij}\circ k^\dag_{ij}\otimes \ket j
\end{equation}

\end{theorem}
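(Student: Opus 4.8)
The plan is to verify the three Knill--Laflamme conditions \eqref{eq:error1}, \eqref{eq:error2} and \eqref{eq:error3} for $\bra T$, read as an encoding map $\mathcal E:\C^d\to\C^n\otimes\C^d\otimes\C^n$ whose three output legs are the row register $\C^n$, the codomain register $\C^d$ of the operators $T_{ij}:=q_{ij}\circ k_{ij}^\dagger$, and the column register $\C^n$. Inserting $\sum_{x}\ketbra{x}{x}=\I_d$ on the middle leg exhibits the components as $\ket{E_{ixj}}=T_{ij}^\dagger\ket x$, so that after contracting the two summed legs each condition becomes a sum of the shape $\sum_{ij}T_{ij}^\dagger\circ X\circ T_{ij}$ for a suitable error $X$; equivalently I would evaluate $\mathcal E^\dagger\circ(\,\cdot\,)\circ\mathcal E$ directly for an arbitrary local error on each register.

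For an error on either $\C^n$ register the computation is routine and uses only the quantum isometry Latin square axioms. For the column register (third leg) I expand $T_{il}^\dagger\circ T_{ik}=k_{il}\circ q_{il}^\dagger\circ q_{ik}\circ k_{ik}^\dagger$, use the row relation \eqref{eq:ipm1} for $q$ to get $q_{il}^\dagger\circ q_{ik}=\delta_{lk}\I$, and close up with the completeness relation \eqref{eq:ipm3}; the row register (first leg) is identical after exchanging \eqref{eq:ipm1} for the column relation \eqref{eq:ipm2}. In both cases the sum collapses to $\I_d\otimes\I_n$, matching \eqref{eq:error1} and \eqref{eq:error3}. Orthogonality plays no role here: these two conditions hold for every pair of quantum isometry Latin squares sharing the multiset $a_{ij}$.

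The substantive step --- and the one I expect to be the main obstacle --- is the error on the middle register $\C^d$, which is exactly where orthogonality enters. Here the isometry relations do not simplify the sum; it reduces to $\sum_{ij}T_{ij}^\dagger\circ\ketbra{x'}{x}\circ T_{ij}$, and I would control it using that the nonzero $T_{ij}$ form a Hilbert--Schmidt-orthogonal basis of the operators on $\C^d$, all of the same norm (Definition~\ref{def:oipm} together with the remark $\Tr(T_{ij}^\dagger\circ T_{pq})=\delta_{ip}\delta_{jq}\Tr(T_{ij}^\dagger\circ T_{ij})$). Rescaling to an orthonormal operator basis $\{W_\alpha\}$ and applying the standard identity $\sum_\alpha W_\alpha^\dagger\circ X\circ W_\alpha=\Tr(X)\,\I_d$ gives $\sum_{ij}T_{ij}^\dagger\circ\ketbra{x'}{x}\circ T_{ij}=a\,\delta_{x'x}\,\I_d$, so the middle condition holds with right-hand side proportional to $\I_d\otimes\I_d$ --- and proportionality to the identity is all that Knill--Laflamme error detection requires. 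Equivalently this step can be read off from Lemma~\ref{lem:oipm}, whose isometry $S$ is precisely a repackaging of the orthogonality of the (vectorized) operators $T_{ij}$. Assembling the three conditions and invoking the Knill--Laflamme criterion above then finishes the proof.
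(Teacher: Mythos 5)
Your proposal is correct and follows essentially the same route as the paper: both verify the three Knill--Laflamme conditions, deriving the two $\C^n$-register conditions from the quantum isometry Latin square axioms \eqref{eq:ipm1}--\eqref{eq:ipm3} alone, and the $\C^d$-register condition from orthogonality via Lemma~\ref{lem:oipm} (your rescaled orthonormal-operator-basis identity $\sum_\alpha W_\alpha^\dag\circ X\circ W_\alpha=\Tr(X)\,\I_d$ is exactly the content of equation~\eqref{eq:tired}). If anything you are more careful than the paper about the proportionality constant $a$ in the middle condition, which the paper's normalization quietly takes to be $1$.
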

\begin{proof}
First we show equation~\eqref{eq:error1}, we have:
\begin{align*}
\sum_{x=0}^{d-1}\,\,\, \sum_{l,i,j=0}^{n-1}\ketbra{T}{T}\otimes\ketbra{l}{i}&=\sum_{x=0}^{d-1}\,\,\, \sum_{l,i,j=0}^{n-1}\ket l \otimes k_{lj}\circ q^\dag_{lj}\ketbra x x q_{ij}\circ k^\dag_{ij}\otimes \bra i\\
&\super{\eqref{eq:ipm2}}=\sum_{i,j=0}^{n-1}\ket i \otimes k_{ij}\circ k^\dag_{ij}\otimes \bra i\super{\eqref{eq:ipm3}}=\I_d \otimes \I_n  
\end{align*}  
Equation~\eqref{eq:error3} can be derived from equations~\eqref{eq:ipm1} and ~\eqref{eq:ipm3} similarly.

We now show equation~\eqref{eq:error2}.
\begin{align*}
\sum_{x,y=0}^{d-1}\,\,\, \sum_{i,j=0}^{n-1}\ketbra{T}{T}\otimes\ketbra{y}{x}&=\sum_{x,y=0}^{d-1}\,\,\, \sum_{i,j=0}^{n-1}\ket y \otimes k_{ij}\circ q^\dag_{ij}\ket y\bra x q_{ij} \circ k_{ij}^\dag \otimes \bra x\\
&\super{\eqref{eq:tired}}=\I_{d^2}
\end{align*} 
\end{proof}
\begin{example}Given the orthogonal skew PPM $T=\{T_{ij}|i,j\in \{0,...,7\}\}$ as in Example~\ref{ex:orthoiqls}, by Theorem~\ref{thm:ilstensor} the following three-to-one 4-valent tensor  is an encoding map that detects a single qubit local error:
 \begin{equation*}
 \bra T:=\sum_{i,j=0}^{n-1}\ket i \otimes T_{ij}\otimes \ket j
 \end{equation*}
\end{example} 
\subsection{Orthogonal quantum Latin isometry squares from unitary error bases}\label{sec:ueb}

Here we recall the standard notion of unitary error basis, and show that they can be characterized as orthogonal pairs of quantum Latin isometry squares, which are not quantum Latin squares. Unitary error bases were introduced by Werner~\cite{Werner:2001}, and provide the basic data for quantum teleportation, dense coding and error correction procedures~\cite{Werner:2001, Knill:1996:2, Shor:1996}.

\begin{definition}
For a Hilbert space $\C^n$, a \textit{unitary error basis} is a family of unitary operators $U_i:\C^n\ \to \C^n$ which span the space of operators, and which are orthogonal under the trace inner product:
\begin{equation}
\Tr(U_i^{} \circ U_j ^\dag) = n \delta_{ij}
\end{equation}
\end{definition}

We have already seen in Example~\ref{ex:uniils} that any family of unitaries is a quantum Latin isometry square. We now show that unitary error bases can be characterized in terms of an orthogonal pair of quantum Latin isometry squares. We first define the \textit{identity  square}.
\begin{definition}[Identity  square]
The $d$-dimensional \textit{identity  square} is  the quantum Latin isometry square given by $( \I_d \delta _{ij}, d \delta _{ij}, d )$.
Note that the identity square is also a skew PPM and a PPM.\end{definition}
We now present the main result of this section.  
\begin{theorem}
\label{thm:uebviaorthog}
The following are equivalent:
\begin{itemize}
\item the set of all $n^2$-by-$n^2$ quantum Latin isometry squares that are orthogonal to the $n^2$-dimensional identity square;
\item the set of all unitary error basis for $\C^n$.
\end{itemize}
\end{theorem}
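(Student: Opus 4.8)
The plan is to establish a bijective correspondence between the two sets by making the diagonal structure explicit on both sides. The key observation is that the $n^2$-dimensional identity square is the skew PPM $(\I_{n^2}\delta_{ij}, n^2\delta_{ij}, n^2)$, so it is diagonal: its only nonzero entries are $\I_{n^2}$ on the leading diagonal. By Definition~\ref{def:oipm}, a quantum Latin isometry square $K=(k_{ij}, a_{ij}, n^2)$ is orthogonal to the identity square precisely when the operators $T_{ij} = q_{ij}\circ k_{ij}^\dag$ span the operator space and satisfy the trace-normalization condition, where $q_{ij} = \I_{n^2}\delta_{ij}$ are the entries of the identity square. Since $q_{ij}$ vanishes off the diagonal, $T_{ij} = 0$ whenever $i\neq j$, and $T_{ii} = k_{ii}^\dag$. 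Thus orthogonality to the identity square forces $K$ itself to be diagonal, with off-diagonal entries zero.

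First I would analyze the diagonal entries. For $K$ to be a quantum Latin isometry square that is diagonal, each row and column can contain only one nonzero isometry, namely $k_{ii}:\C^{a_{ii}}\to\C^{n^2}$; equations~\eqref{eq:ipm1},~\eqref{eq:ipm2},~and~\eqref{eq:ipm3} then force $k_{ii}\circ k_{ii}^\dag = \I_{n^2}$, so each $k_{ii}$ is a unitary on $\C^{n^2}$ and $a_{ii}=n^2$. Writing $U_i := k_{ii}^\dag$, we have $T_{ii} = U_i$, and $K$ is exactly the square $L(\mathcal U)$ of Example~\ref{ex:uniils} for the family $\mathcal U = \{U_i\}$. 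It remains to translate the orthogonality conditions of Definition~\ref{def:oipm} into the defining conditions of a unitary error basis. The spanning condition on the $T_{ij}$ becomes the requirement that the $U_i$ span the operator space on $\C^{n^2}$, which matches the unitary error basis definition. The trace condition $\Tr(T_{ii}^\dag\circ T_{ii}) = a$ gives $\Tr(U_i^\dag\circ U_i) = \Tr(\I_{n^2}) = n^2$ automatically, while the \emph{off-diagonal} trace-orthogonality $\Tr(U_i^\dag\circ U_j) = 0$ for $i\neq j$ follows from the remark that $\Tr(T_{ij}^\dag\circ T_{pq}) = \delta_{ip}\delta_{jq}\Tr(T_{ij}^\dag\circ T_{ij})$ together with the fact that the $T_{ij}$ span and are trace-normalized.

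Conversely, given a unitary error basis $\{U_i\}$ for $\C^n$ of size $n^2$, I would set $U_i' := U_i^\dag$ and form the diagonal quantum Latin isometry square $L(\{U_i'\})$, which has entries $k_{ii} = U_i'\delta_{ij}$, and verify directly from Lemma~\ref{lem:oipm} that the map $S$ is an isometry: the $T_{ij} = U_i\delta_{ij}$ span the operator space and are trace-orthonormal by the unitary error basis axioms, which is exactly the content of equation~\eqref{eq:tired}. This shows $L(\{U_i'\})$ is orthogonal to the identity square.

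\textbf{The main obstacle} I anticipate is the careful bookkeeping required to show that orthogonality to the identity square \emph{forces} the square $K$ to be diagonal in the first place — one must argue that a nonzero off-diagonal $k_{ij}$ would produce a zero $T_{ij}$ (since $q_{ij}=0$ there), and then confront the spanning condition: with at most $n^2$ nonzero $T_{ii}$ available, they can span the $n^4$-dimensional operator space on $\C^{n^2}$ only if each is a full unitary, which simultaneously pins down $a_{ii}=n^2$ and rules out any additional nonzero off-diagonal contributions. Handling the dimension-counting and ensuring that the partition conditions~\eqref{eq:ipm3} are compatible with the forced diagonal shape is where the argument needs the most care, but it is essentially a rigidity computation rather than a conceptual difficulty.
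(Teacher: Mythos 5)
Your overall strategy is the same as the paper's: composition with the identity square forces the quantum Latin isometry square to be diagonal, the diagonal entries are isometries between spaces of equal dimension and hence unitaries, and the orthogonality condition then unpacks to the unitary error basis axioms. However, there is a genuine error in your dimension bookkeeping, and it sits exactly at the point you flag as the ``main obstacle.'' By taking the identity square literally as $(\I_{n^2}\delta_{ij}, n^2\delta_{ij}, n^2)$, you put the $n^2$ diagonal unitaries $T_{ii}$ on the Hilbert space $\C^{n^2}$ and then require them to span its operator space, which has dimension $n^4$. Your claim that ``they can span the $n^4$-dimensional operator space on $\C^{n^2}$ only if each is a full unitary'' is false: no collection of $n^2$ operators can span an $n^4$-dimensional space for $n\geq 2$, unitary or not. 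Under your reading the first set in the theorem is simply empty, and in any case its elements would be unitaries on $\C^{n^2}$, which cannot literally constitute a unitary error basis for $\C^n$. Your converse direction silently switches conventions: $L(\{U_i^\dag\})=(U_i^\dag\delta_{ij}, n\delta_{ij}, n)$ has $d=n$ and is composable only with the $n^2$-by-$n^2$ array $(\I_n\delta_{ij}, n\delta_{ij}, n)$, not with the square you used in the forward direction. So the two halves of your argument concern different objects and do not assemble into an equivalence.

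The repair, which is what the paper's proof implicitly does, is to read the identity square in the theorem as the $n^2$-by-$n^2$ array with Hilbert-space dimension $n$, i.e.\ $(\I_n\delta_{ij}, n\delta_{ij}, n)$ (the ``$n^2$'' in the statement is best understood as the array size). Then Definition~\ref{def:oipm} forces $a_{ij}=n\delta_{ij}$ for composability, the $n^2$ diagonal entries are unitaries on $\C^n$, and the spanning condition concerns the $n^2$-dimensional operator space of $\C^n$, where $n^2$ trace-orthogonal unitaries of equal norm do exactly form a unitary error basis; with that correction your argument collapses onto the paper's one-paragraph proof. One further caution: you derive the off-diagonal trace orthogonality $\Tr(U_i^\dag\circ U_j)=0$ from the Remark following Lemma~\ref{lem:oipm}, but for a diagonal square the operators $T_{ii}$ and $T_{pp}$ with $i\neq p$ share neither a row nor a column, so equations \eqref{eq:ipm1}--\eqref{eq:ipm3} place no constraint on $\Tr(T_{ii}^\dag\circ T_{pp})=\Tr(U_pU_i^\dag)$; this orthogonality is part of what the orthogonality hypothesis of Definition~\ref{def:oipm} (equivalently the isometry condition of Lemma~\ref{lem:oipm} plus a counting argument) must supply, not an automatic consequence of the quantum Latin isometry square axioms.
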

\begin{proof}
First note that any quantum Latin isometry square that is orthogonal to the identity square must be of  the form $(X_{ij},d\delta_{ij},d)$ in order to be composed.  The isometries $X_{ij}$ are linear operators and must  therefore be unitary. The orthogonality condition for the quantum Latin isometry squares unpacks to the requirement that the unitary operators $X_i$ are orthogonal and span the space of operators; this is exactly the unitary error basis condition.
\end{proof}  
\setlength{\bibsep}{0pt plus 0.3ex}
\bibliographystyle{plainurl}
\bibliography{PerfectTensors}

\end{document}

 recently proposed by Goyeneche et al~\cite{Goyeneche2017}. In that paper,